\definecolor{mygreen}{RGB}{28,172,0} 
\definecolor{mylilas}{RGB}{170,55,241}
\newtheorem{lemma}{Lemma}
\newtheorem{theorem}{Theorem}
\newtheorem{assumption}{Assumption}
\title{Policy Optimization in Multi-Agent Settings under Partially Observable Environments
}
\author{Ainur Zhaikhan, Malek Khammassi, and Ali H. Sayed

\thanks{A. Zhaikhan, M.Khammasi, and A. H. Sayed are with the Adaptive
 Systems Laboratory, \'Ecole Polytechnique F\'ed\'erale de Lausanne (EPFL),
 CH-1015, Switzerland. Emails: ainur.zhaikan@epfl.ch and ali.sayed@epfl.ch}}
\begin{document}
\maketitle
\begin{abstract}
This work leverages  adaptive social learning to estimate partially observable global states in  multi-agent reinforcement learning (MARL) problems. Unlike existing methods, the proposed approach enables the concurrent operation of social learning and reinforcement learning. Specifically, it alternates between a single step of social learning and a single step of MARL, eliminating the need for the time- and computation-intensive two-timescale learning frameworks. Theoretical guarantees are provided to support the effectiveness of the proposed method. Simulation results verify that the performance of the proposed methodology  can approach that of reinforcement learning when the true state is known.
    \end{abstract}
    \begin{IEEEkeywords}
multi-agent system, reinforcement learning, actor-critic, off-policy, social learning, partially observable state
\end{IEEEkeywords}
\section{Introduction}
The apparent benefits of collaboration are making multi-agent systems and multi-agent algorithms increasingly popular. However, multi-agent problems come with their own challenges, one of which is the limited observability of individual agents. Physical barriers, technical limitations of equipment, and noisy environments all contribute to the difficulty agents face in accurately determining the global state. Addressing partial observability in a multi-agent setting is known to be an NEXP-hard problem, as it requires each agent to have access to the observations of all other agents---an approach that is often impractical~\cite{NEXP}.\par 
Current solutions for partial observability in MARL are largely based on neural networks, where the responses of the environment must be inferred from local observations. This setup assumes that the local policies of agents are conditioned solely on their own observations. However, optimal policies should ideally be conditioned on the global state. Although such black-box approaches have shown success in certain applications, they often lack interpretability and theoretical grounding. To address this lack of clarity, the idea of using adaptive social learning for state estimation has been proposed in~\cite{Policy_eval}. \par 
In the framework of social learning, each agent estimates its own belief about the global state using local observations and available models for likelihood functions. The agents then diffuse their belief vectors with their neighbors to improve their estimates, employing an adaptation parameter that enables tracking of a dynamically changing state. Repeated application of these two operations leads to the convergence of all agents to the \textit{true belief vector}—that is, a basis vector with a unit entry at the location corresponding to the true state.\par 
In this paper, we aim to incorporate adaptive social learning to estimate states in MARL with partially observable environments. The key difference from ~\cite{Policy_eval} lies in the underlying reinforcement learning setting; while ~\cite{Policy_eval} focuses on a policy evaluation setting, we consider a more complex MARL framework, one that deals  with a multi-agent off-policy actor-critic (MAOPAC) formulation in a manner that extends the earlier work \cite{SUTTLE20201549}; this last work, however, assumes completely observable states while we forgo this requirement and study environments with \textit{only partially} observable global states. \par 
Specifically, in policy evaluation scenarios, the policy is fixed and the objective is to evaluate state values. In contrast, the MAOPAC scenario deals with policy {\em optimization}, where the goal is not only to evaluate but also to learn the optimal policy. This objective introduces an additional layer of complexity and randomness. Moreover, by being an off-policy algorithm, MAOPAC requires correction mechanisms such as importance sampling, which introduce additional intermediate variables. Consequently, analyzing the impact of state estimation on MAOPAC becomes significantly more complex.\par
The use of social learning for state estimation in MAOPAC has also been considered in~\cite{Ainur} with one key difference. In that work,  each iteration of the reinforcement learning algorithm involves multiple (repeated) rounds of social learning, which adds complexity. The use of repeated social interactions per iteration is meant to ensure that the social learning algorithm is able to attain a good estimate of the unobservable state. However, and interestingly, in this work we establish that social and reinforcement learning can work hand-in-hand and in parallel, with both converging together to the desired state. In other words, although the initial state estimates by social learning are not accurate due to limited interactions, the reinforcement learning strategy can still rely on these predictions and both methods will evolve over time in a manner that converges toward a stable solution.  We therefore end up with a parallel learning framework, where state estimation and reinforcement learning are performed simultaneously—specifically, each step of reinforcement learning is coupled with a single step of social learning.\par 
We show that the parallel implementation of social learning within MAOPAC can achieve $\varepsilon$-optimality under the condition that the underlying states change slowly. Therefore, our work is well-suited for   scenarios such as networks of maintenance machines or sensor networks operating in slowly changing environments.\par 
The main contributions of this work can be summarized as follows: 1) we extend the framework of MAOPAC to settings where the global state is only partially observable, 2) we investigate the use of adaptive social learning in MARL for more sophisticated reinforcement learning algorithms than those considered in prior works, 3) we integrate social learning with MAOPAC by removing the need for two-time-scale learning and instead implement a concurrent learning scheme, where global state estimation and reinforcement learning are carried out in parallel under the assumption of slowly evolving states, 4) we provide a theoretical analysis showing that the performance of the proposed scheme can approach that of MAOPAC under full observability, and 5) we illustrate the proposed method through simulation experiments.\par 
The organization of the paper is as follows. Section~\ref{Sec:state of the art} presents the state of the art. Section~\ref{sec: Preliminary} introduces the relevant notation and definitions. Section~\ref{sec:algorithm} describes the proposed method. Theoretical analysis and simulation results are presented in Sections~\ref{sec:theory} and ~\ref{sec:experiment}, respectively. Concluding remarks are provided in Section \ref{sec:conclusion}.

\section{State of the art}
\label{Sec:state of the art}
\subsection{Pseudo-decentralized solutions}
Due to the challenges of achieving full decentralization under partial observability, several algorithms have been proposed that adopt pseudo-decentralized approaches. One such approach is \textit{independent learning}~\cite{IL1}, where each agent learns independently by treating the behavior of other agents as part of a non-stationary environment. While this method can be effective in some scenarios, it often struggles with environmental instability in more complex tasks.\par 

Another prominent class of algorithms is based on the \textit{centralized training with decentralized execution} (CTDE) paradigm. This includes methods such as \textsc{QMIX}~\cite{QMIX}, \textsc{COMA}~\cite{COMA}, \textsc{MADDPG}~\cite{MADPPG}, \textsc{VDN}~\cite{VDN}, and \textsc{QTRAN}~\cite{QTRAN}. As the name suggests, CTDE methods leverage a centralized component during training—typically to estimate global state values—while enabling agents to act using decentralized policies during execution. In our work, we aim to overcome the limitations of CTDE by enabling \textit{fully decentralized learning} through \textit{inter-agent communication}, eliminating the need for a central unit during training.
\subsection{Neural network-based approaches}  
A common strategy for addressing multi-agent reinforcement learning (MARL) under partial observability involves leveraging neural network (NN) architectures~\cite{Neural_POMDP, Actor_critic_POMDP, POMDP_RNN, Mean_field}. Several NN-based methods, such as those presented in~\cite{Actor_critic_POMDP, Neural_POMDP, Belief1, LSTM, NN_POMDP}, have been successfully applied to single-agent problems with partial observability, while extensions to multi-agent settings are explored in~\cite{POMDP_RNN, NN_model_free, NN_model_free2}. These methods typically rely on neural networks to learn direct mappings from observations to observation-action values.

However, substituting state-action values with observation-action values does not always yield optimal performance, as the success of such methods depends on the model’s ability to accurately approximate the underlying value functions. Moreover, in NN-based approaches, local policies are usually conditioned directly on local observations, whereas, in theory, individual policies should ideally be conditioned on the global state.

The scheme proposed in our work allows us to  condition policies on a \textit{belief vector}, which represents a probabilistic estimate of the global state. Additionally, NN-based techniques are often regarded as brute-force solutions, offering limited analytical insight. Their evaluation is largely based on empirical performance metrics—such as computational efficiency and network complexity—rather than theoretical guarantees. In contrast, our work addresses partial observability through adaptive social learning, which provides theoretical convergence guarantees. This allows our proposed approach to be effective in practice, as well as supported by theoretical foundations.


\subsection{Communication-based solutions}  
Prior research has explored the use of belief diffusion to tackle decentralized partially observable Markov decision processes (Dec-POMDPs). Studies such as \cite{Consensus1, Similar_1} investigate how consensus protocols can be adapted for Dec-POMDPs, primarily in the context of online planning. However, their application to reinforcement learning remains largely unexplored. These works mainly analyze consensus algorithms without establishing specific convergence proofs within RL settings and often rely on heuristic-based methodologies.  

The study in \cite{Consensus2} integrates belief diffusion into fundamental RL methods like Q-learning. Our approach embeds belief diffusion into more sophisticated RL techniques—specifically, off-policy multi-agent actor-critic frameworks—while providing rigorous analysis and convergence conditions.
The use of adaptive social learning in the context of reinforcement learning was first considered in \cite{Policy_eval}, where the authors focused on the evaluation of a fixed (not necessarily optimal) policy. In contrast, our work addresses the more challenging problem of learning an {\em optimal} policy. Moreover, our proposed method involves off-policy learning, which requires the incorporation of correction terms. This leads to the introduction of additional variables and necessitates a more complex theoretical analysis compared to policy evaluation.
Another work that  leverages communication among agents and proposed a fully decentralized scheme for MARL  with partial observations  is presented in \cite{MA_POMDP}. This method employs zeroth-order policy optimization (ZOPO) to localize learning by decomposing cumulative reward estimation steps and then diffusing those estimates with neighbors. This method has theoretical guarantees. However,  the method is designed for on-policy learning which is less general than MAOPAC. Moreover, the scheme is computationally costly since each update of  policy parameters requires knowledge of cumulative rewards, which requires another inner loop of experiments for estimation as in the REINFORCE algorithm \cite{REINFORCE}. 

\section{Multi-Agent Model}
\label{sec: Preliminary}
We model the setting as a \textit{Decentralized Partially Observable Markov Decision Process (Dec-POMDP)} defined by the tuple:
\begin{align}
\mathcal{M} \triangleq \left(\mathcal{K}, \widetilde{\mathcal{A}}, \mathcal{S}, \{\mathcal{O}_{k}\}_{k=1}^{K}, \{r_{k}\}_{k=1}^{K}, \mathcal{P}, \{\mathcal{L}_{k}\}_{k=1}^{K}\right)
\end{align}
\noindent where the symbols denote the following variables:
\begin{itemize}
    \item \( \mathcal{K} \triangleq \{1, 2, \ldots, K\} \): the set of agents.
    \item \( \mathcal{A} \): the action set for each agent.
    \item \( \widetilde{\mathcal{A}} \triangleq \mathcal{A}^K \): the joint action space.
    \item \( \mathcal{S} \): the set of global states.
    \item \( \mathcal{O}_k \): the observation set for agent \( k \in \mathcal{K} \).
    \item \( \mathcal{L}_k(\xi | s) \): a likelihood function specifying the probability of agent \( k \) observing \( \xi \in \mathcal{O}_k \) given the true state \( s \in \mathcal{S} \).
\end{itemize}
\noindent We assume that the action space \( \mathcal{A} \) and the state space \( \mathcal{S} \) are finite, with sizes \( A \) and \( S \), respectively.\par 
The environment dynamics is governed by a transition model
$\mathcal{P} : \mathcal{S} \times \widetilde{\mathcal{A}} \to \mathcal{S}$. 
Specifically, $\mathcal{P}(s_2 | s_1, a)$ denotes the probability of transitioning from state \( s_1 \) to state \( s_2 \) under a joint action \( a = \{a_k\}_{k=1}^K \). Each agent \( k \) has an individual reward function \( r_k : \mathcal{S} \times \mathcal{A} \times \mathcal{S} \to \mathbb{R} \), assumed to be uniformly bounded by a constant \( R_{\max} > 0 \). The reward received by agent \( k \) at time \( n \), due to a transition from state \( s_n \) to \( s_n' \) under action \( a_{k,n} \), is denoted by:
\begin{align}
r_{k,n} \triangleq r_k(s_n, a_{k,n}, s_n')
\end{align}
We assume a networked communication framework among agents. The communication is defined by a graph $G$, where each agent corresponds to a node. Two agents can exchange information if a direct (single-hop) edge exists between them in the graph. \par 
We consider a $K\times K$ combination matrix 
$C$, which assigns weights to neighboring agents during information exchange. The matrix $C$ is \textit {doubly stochastic}, meaning that each row and column sums to one. Moreover, it respects the structure of the graph 
$G$; specifically, if there is no edge between agents 
$i$ and $j$, then the corresponding entry $c_{i,j}$ is zero. On the other hand, if agents $i$ and $j$ are linked, then the weights on the edges connecting them in both directions are $c_{ij}$ and $c_{ji}$. Furthermore, we assume that the underlying network topology is strongly connected, as formally stated in Assumption~\ref{assumption:graph}.
 \par 
\begin{assumption}[\textbf{Strongly connected graph}]
\label{assumption:graph}
    The underlying graph topology is assumed to be \textit{strongly connected}, i.e, any two agents in the network can be connected via a path with positive combination weights and, at least one agent $\ell$ in the network has a nonzero self loop, $c_{\ell,  \ell}>0$. \qed
\end{assumption}
With respect to the multi-agent policy model, we assume that each agent’s decisions are conditionally independent of the actions of other agents, given the current global state. Under this assumption, any joint policy \( \pi \)  can be factorized as:
\begin{align}
    \pi(\tilde{a} \mid s) &= \prod_{k=1}^K \pi_{\ell}(a_{k} \in \mathcal{A} | s), \quad \forall \widetilde{a} \in \widetilde{\mathcal{A}}, \forall s\in \mathcal{S} \label{eq:policy_model}
\end{align}
where $\pi_{k}$ denotes the individual policy executed by agent $k$ and $\tilde{a}\triangleq \{a_1, a_2, \dots a_K \} \in \widetilde{\mathcal{A}}$ denotes the joint action taken by all agents in the network. This model is adopted to facilitate fully decentralized learning.



\section{MARL+Social learning}
\label{sec:algorithm}
This work aims to study the effectiveness of multi-agent reinforcement learning (MARL) in settings where the global state is not fully observable and is inferred by means of a social learning strategy. Henceforth, we refer to the proposed scheme as \textit{MARL+SL}.
\par 
We focus on a specific MARL setting, namely, Multi-Agent Off-Policy Actor-Critic (MAOPAC) introduced in \cite{SUTTLE20201549}. However, MAOPAC assumes that the global state is known by all agents. While actor-critic approaches are powerful tools with strong theoretical guarantees, their reliance on global state information greatly limits their practical applicability. In real-world scenarios, agents often operate in large, spatially distributed environments where full observability is unrealistic. Instead, each agent typically has access only to local observations, which provide partial and potentially limited information about the global state.

To address this limitation, the scheme proposed in our work enables agents to estimate the global state by leveraging both their local observations and inter-agent communication. This allows each agent to operate effectively in settings with limited observability, broadening the practical utility of actor-critic methods in decentralized environments.

For this purpose, we will devise a framework that integrates MAOPAC with adaptive social learning to effectively predict the unknown global state in partially observable environments. One of the main features of our approach is that it  enables state estimation and reinforcement learning to operate on a single time-scale. That is, state learning and policy learning occur concurrently, eliminating the need for multiple iterations of state estimation for each reinforcement learning update.
To achieve this objective, we first provide a brief overview of the MAOPAC algorithm and then describe how social learning can be leveraged to enable MAOPAC in partially observable settings.  The full listing of the proposed scheme is provided later in Algorithm \ref{Algorithm1}, which will be referenced repeatedly in subsequent discussions.

\subsection{MAOPAC} 
The MAOPAC algorithm studied in ~\cite{SUTTLE20201549} belongs to the class of actor-critic methods~\cite{RL_book}, which are among the most widely adopted techniques in reinforcement learning due to their well-established theoretical properties. At a high level, MAOPAC simultaneously learns individual optimal policies \( \pi^{\dagger}_{k} \) using policy gradient techniques, while estimating and utilizing global state-value functions to guide learning. The MAOPAC formulation from \cite{SUTTLE20201549} is an advanced version of the multi-agent actor-critic algorithm that addresses possible instabilities arising from function approximations. To achieve this, it introduces additional intermediate variables such as \textit{emphatic weightings} and \textit{eligibility traces} \cite{ETD1, ETD2, SUTTLE20201549}. \par 
For simplicity of analysis, and without significant loss of generality, we will consider a special case of MAOPAC by removing the emphatic weighting and eligibility trace variables, thereby reducing the algorithm to the classical actor-critic framework. This simplification remains effective when linear function approximation is suitable for the problem and the risk of instability arising from approximation errors is minimal.\par 
To formalize the method, we associate  a state-value function with a joint policy \( \pi \), and define it for each state \( s \in \mathcal{S} \) as
\begin{align}
    V^\pi(s) \triangleq \mathbb{E}\left[\sum_{n=0}^{\infty} \gamma^n \bar{r}_n \mid s_0 = s \right].
\end{align}
where  \( \gamma \in (0,1) \) is the discount factor and $\bar{r}_n$ is the average reward across all \( K \) agents at time step \( n \), i.e., 
\begin{align}
    \bar{r}_n = \frac{1}{K} \sum_{k=1}^{K} r_{k,n},
\end{align}
MAOPAC aims  to learn a joint policy \( \pi^{\dagger} \triangleq \prod_{k=1}^K\pi_{k}^{\dagger}\) that maximizes the expected cumulative reward over an infinite horizon, i.e., 
\begin{align}
   \pi^{\dagger} =\arg \max \limits_{\pi} J
\end{align}
where 
\begin{align}
    J \triangleq \mathbb{E}\left[\sum_{n=0}^{\infty} \gamma^n \bar{r}_n \right]=\mathbb{E}_{s} V^{\pi}(s),
\end{align}
The optimal policy estimate by an arbitrary agent $k$ at time $n$ will be denoted by $\pi_{k,n}$ and will be approximated by some function $g$  of the state feature vector $\mu_n$ corresponding to the current state $s_n$ and parameter vectors $\theta_{k,n} \triangleq \{\theta_{k,n}^a \in \mathbb{R}^S\}_{a \in \mathcal{A}}$ as follows 
\begin{align}
    \pi_{k,n}(a|s_n) \approx g (a|\mu_{n};\theta_{k,n}), \forall a \in \mathcal{A} \label{eq:target_pol_def}
\end{align}
An example of a policy function is the Boltzmann dsitribution, defined as:
\begin{align}
g(a|\mu;\theta) = \frac{\exp(\mu^{T}\theta^{a})}{\sum\limits_{a'\in \mathcal{A}}\exp(\mu^{T}\theta^{a'})} \label{eq:boltzman}
\end{align}
where $\theta\triangleq \{\theta^a\}_{a\in \mathcal{A}}$. \par 
Since we assume an off-policy learning setting, the behavioral policy of each agent, denoted by $b_{k,n}$, may differ from the estimated optimal policy $\pi_{k,n}$. We approximate the behavioral policies using a function $q$, parameterized by vectors $\chi_k \triangleq \{\chi_{k}^{a} \in \mathbb{R}^{S}\}_{a \in \mathcal{A}}$:
\begin{align}
     b_k(a|s_n) \approx  q(a|{\mu}_{n};\chi_{k}), \forall a\in \mathcal{A} \label{eq:beh_pol_def}
\end{align}
Off-policy learning requires statistical corrections using the \textit{joint importance sampling ratio} defined as 
\begin{align}
\rho_{n}\triangleq \prod \limits_{k\in \mathcal{K}}\frac{ \pi_{k,n}(a_{k,n}|s_n)}{b_{k}(a_{k,n}|s_{n})}\approx \prod \limits_{k\in \mathcal{K}} \frac{g (a_{k,n}|\mu_{n}; \theta_{k,n})}{q_{k}(a_{k,n}|\mu_n; \chi_k )} \label{eq:joint_rho}
\end{align}
To learn the optimal policies, each agent maintains a local estimate of the state value function, which is approximated using the following linear parameterization:
\begin{align}
\label{eq:linear_approx}
v_{\omega_{k,n}} \approx \mu_{n}^T \omega_{k,n}
\end{align}
where \( \mu_{n} \in \mathbb{R}^S \) denotes the state feature vector corresponding to the state $s_n$, and \( \omega_{k,n} \in \mathbb{R}^S \) represents a parameter vector, commonly referred to as the \textit{critic parameter}.   Note that in the proposed scheme, state feature vectors $\mu_{n}$ will be redefined as \textit{belief vectors}; the reason for this reinterpretation will be explained in subsequent sections. \par 
Each agent updates its critic parameter using a local temporal difference error defined as
\begin{align} 
            &\delta_{k,n} = r_{k,n} + \gamma \omega_{k,n}^T \eta_{n} - \omega_{k,n}^T \mu_{n} \label{eq:concept_omega}
\end{align}
where $\eta_n$ denotes the state feature vector corresponding to the next global state $s_{n+1}$.  Therefore, the first step in estimating the critic parameters is to compute
\begin{align}
    \widetilde{\omega}_{k,n} = \omega_{k,n} + \beta_n \rho_n \delta_{k,n} \mu_{n} \label{eq:concept_delta1}
\end{align}
where the importance sampling ratio $\rho_n$ is applied to account for the distribution mismatch introduced by off-policy data collection.

Note that the updates in \eqref{eq:concept_omega}--\eqref{eq:concept_delta1} are based on local rewards. Therefore, to aggregate local information and improve estimation accuracy, agents share their individual critic values with direct neighbors and employ a diffusion strategy to obtain a more accurate estimate of the state value: 
       \begin{equation}
 \omega_{k,n+1} = \sum_{\ell \in \mathcal{N}_k} c_{\ell,k} \widetilde{\omega}_{k,n+1} \label{eq:concept_omega_dif}
        \end{equation}
The scheme in \eqref{eq:concept_delta1}--\eqref{eq:concept_omega_dif}, inspired by \cite{Sayed_diffusion} and first applied to state value estimation in \cite{SUTTLE20201549}, enables agents to estimate the global state value by means of inter-agent communication, despite having access only to local rewards.
\noindent 

\par 
 The critic parameters are estimated in the MAOPAC method, as they are involved in the learning of individual optimal policies. As demonstrated in \cite{SUTTLE20201549}, the gradient of the objective function $J$ with respect to the policy parameters $\theta_{k,n}^a$ can be approximated using samples of
 \begin{align}
     \mathbb{E}\left[\rho_n \delta_{k,n} \nabla_{\theta_{k,n}^a} \log \pi_{k,n}\left(a_{k,n} \mid s_n\right)\right]
 \end{align}
 Therefore, for all $a\in \mathcal{A}$, the policy parameters $\theta_{k,n}^a$, also referred as \textit{actor parameters}, are updated as:
\begin{align}
    \theta_{k,n+1}^a = \theta_{k,n}^a + \beta_{\theta,n} \rho_n \delta_{k,n} \nabla_{\theta^{a}_{k,n}} \log g(a_{k,n} \vert \mu_{n}) \label{eq:}
\end{align}        
\subsubsection{Estimation of the importance sampling ratio}
Note that the variable $\rho_n$ requires knowledge of all behavioral and target policies across the entire network. However, due to the decentralized setting, each agent has access only to its \textit{individual sampling ratio} $\rho_{k,n}$, defined as:
\begin{align}
    \rho_{k,n}&\triangleq\frac{ \pi_{k}(a_{k,n}|s_n)}{b_{k}(a_{k,n}|s_{n})}\approx \frac{g(a_{k,n}|{\mu}_{n}; \theta_{k,n})}{q(a_{k,n} {\mu}_n; \chi_{k})}  \nonumber \\
    &\triangleq h(a_{k,n}|\mu_{n};\theta_{k,n},\chi_k) \label{eq:ind_rho}
\end{align}
We assume that the individual importance sampling ratios are bounded, and we denote their minimum and maximum values by $\rho_{\min}$ and $\rho_{\max}$, respectively:  
\begin{align}
    \rho_{\min}\leq \rho_{k,n} \leq {\rho_{\max}}, \quad \forall k \in \mathcal{K}
\end{align}
To maintain a fully decentralized learning process, the work ~\cite{SUTTLE20201549} uses consensus updates to estimate the joint importance sampling ratio. Specifically, agents exchange information with their direct neighbors and apply the following update to the logarithm of their individual importance sampling ratios, repeated for $t = 0, 1, \ldots, T$:
\begin{align}
    f_{k,n}^{t+1}=\sum_{\ell \in \mathcal{N}_{k}} c_{k,\ell} f_{\ell,n}^t \label{eq:rho_consensus}
\end{align}
where 
\begin{align}
    f_{k,n}^0\triangleq \log \rho_{k,n} \nonumber
\end{align}
After sufficiently large $T$, the estimates at all agents converge to the same network-wide value; that is, for any \( k \neq j \in \mathcal{K} \), it holds that \cite{Average1, Average2, SUTTLE20201549}:
    \begin{equation}
f_{k,n}^T=f_{j,n}^T= \frac{1}{K} \sum_{k =1}^K f_{k,n}^0
\end{equation}
Then, each agent can retrieve the joint importance sampling ratio as follows:
\begin{equation}
\exp \left(K f_{k,n}^T\right)=\exp \left(\sum_{k=1}^K f_{k, n}^0\right)=\prod \limits_{k\in \mathcal{K}} \rho_{k, n}= \rho_n 
\end{equation}
From the description so far, it is evident that MAOPAC requires access to the global state, specifically the global state feature vector $\mu_n$. Consequently, MAOPAC in its original form is not applicable to environments with partial observability. In the following, we employ social learning that addresses this limitation. By leveraging social learning, we enable MAOPAC to function effectively in partially observable environments in a fully decentralized manner.
\subsection{State estimation using social learning}
Social learning is a framework in which networked agents collaborate to identify the hypothesis that best explains their observations ~\cite{JADBABAIE2012210, Nedic2017, ASL, SL_book}. For example, consider a scenario where the global state represents the position of some object on a grid. Agents do not see the object's exact location—i.e., the true global state is unknown—but they receive observations \(\xi_{k,n}\) related to the current global state \(s_n\). For instance, instead of knowing the exact location of the object, agents may be aware of their own coordinates and receive a noisy distance measurement to the object, which serves as their observation. Individually, the available information may be insufficient for a single agent to accurately infer the exact location of the object. However, through collaborative information sharing, the entire network of agents can achieve a more precise estimation of the true global state.

Technically, each agent maintains a belief vector \(\widetilde{\mu}_{k,n} \in \mathbb{R}^{S}\), where the \(i\)-th element represents the probability that the true state is \(i\). Upon receiving observations \(\xi_{k,n}\), agent $k$ updates its belief vector using the likelihood function \(\mathcal{L}_k(\xi_{k,n} | s)\) as described below:
\begin{align}
&\psi_{k, n}(s)  \sim \mathcal{L}_k\left({\xi}_{k,n} \vert s\right) {\widetilde{\mu}}_{k, n}(s)  \label{belief_update_trad}
\end{align}
\par 
\noindent Equation \eqref{belief_update_trad} is simply a local Bayesian update, where the prior probability \(\widetilde{\mu}_{k, n}\) is multiplied by the likelihood function to obtain the posterior probability vector $\pi_{k,n}(s)$; the symbol $\sim$ is used to indicate that the entries of the posterior vector are normalized to add up to one. The posterior estimates \(\psi_{k,n}\) are exchanged with direct neighbors, and belief vectors are refined through geometric averaging, integrating all available information from neighboring agents, as shown below:
\begin{align}
    \widetilde{\mu}_{k, n}(s) \sim \prod_{\ell \in \mathcal{N}_k}\left[{\psi}_{\ell, n}(s)\right]^{c_{\ell k}} \label{belief_update_trad2}  
\end{align}
where $\mathcal{N}_k$ denotes the set of agents that are neighbors of agent $k$.  In the subsequent iterations, new observations arrive, and the update steps \eqref{belief_update_trad}-\eqref{belief_update_trad2} are repeatedly applied to refine the current estimate until all belief vectors reach consensus. \par 
The update scheme in \eqref{belief_update_trad}--\eqref{belief_update_trad2} is well-suited for scenarios where the global state remains fixed. However, in reinforcement learning, the state evolves over time. To address this, one approach is to repeatedly apply steps \eqref{belief_update_trad}--\eqref{belief_update_trad2} at each reinforcement learning iteration until a satisfactory state estimate is obtained, as explored in~\cite{Ainur}. While this method can be effective in environments with rapidly changing states, it introduces a two-time-scale learning dynamic, which increases computational complexity.\par 
In this work, we employ adaptive social learning (ASL)~\cite{ASL,SL_book}, which provides efficient state estimation in dynamic environments, as long as the global state evolves gradually.
 Some useful studies on the ability of adaptive social learning to track slow Markov chains are \cite{Malek,Malek2}. In ASL, the elements of the belief vector \(\psi_{k,n}(s)\) are updated using the following scheme:

\begin{align}
&\psi_{k, n}(s)  \sim \mathcal{L}_k\left({\xi}_{k,n} \vert s\right) {\widetilde{\mu}}^{1-\sigma}_{k, n}(s)\label{belief_update} \\
&\widetilde{\mu}_{k, n}(s) \sim 
 \prod_{\ell \in \mathcal{N}_k}\left[{\psi}_{\ell, t}(s)\right]^{c_{\ell k}}\label{belief_update2}
\end{align}  
This scheme is similar to \eqref{belief_update_trad}-\eqref{belief_update_trad2}, except for the introduction of the positive adaptation parameter \(\sigma\). The parameter $\sigma$ regulates the trade-off between prior and newly acquired information in the belief update process. Specifically, it controls the contribution of the prior belief $\widetilde{\mu}^{1-\sigma}_{k,n}$ relative to the likelihood of the new observation. A larger value of $\sigma$ downgrades the influence of the prior, making the algorithm more responsive to recent observations and thus more adaptable to changes in the underlying state. Conversely, a smaller $\sigma$ places more emphasis on the prior, allowing the algorithm to exploit accumulated information and maintain accuracy in stationary environments. Consequently, $\sigma$ plays a critical role in managing the state estimation error when the state evolves over time.\par 
In traditional MAOPAC, the state feature vectors \( \mu_{n} \) can take any values within the range \([0,1]\). Therefore, in our approach, we replace \( \mu_{n} \) in \eqref{eq:target_pol_def}--\eqref{eq:ind_rho} with a belief vector \( \mu_{k,n} \), which is obtained by first applying the update steps \eqref{belief_update} and \eqref{belief_update2}, followed by a \textit{hard assignment}, defined by:
\begin{align}
    \mu_{k,n}(s) =
\begin{cases} 
1, & s =\arg\max\limits_{s'\in \mathcal{S}} \widetilde{\mu}_{k,n}(s^{\prime}), \\
0, & \text{otherwise}.  
\end{cases} \label{eq:int13}
\end{align}
The reason for performing the additional operation on the belief vectors, rather than directly using $\widetilde{\mu}_{k,n}$, will be discussed in the following subsection.\par 
Each update in MAOPAC requires knowledge of both the current and next states. We estimate the next state \( s_n^{\prime} \) in the same manner as the current state $s_n$. To achieve this, we introduce the belief vector \( \eta_{k,n} \), which is updated using the same scheme as in \eqref{belief_update}-\eqref{eq:int13}, i.e., for all \( s \in \mathcal{S} \) and \( k \in \mathcal{K} \)
\begin{align}
\zeta_{k, n}(s) & \sim \mathcal{L}_k\left(\xi_{k, n+1} \vert s\right) \widetilde{\eta}_{k, n}^{1-\sigma}(s) \label{eq:eta1}\\
\widetilde{\eta}_{k, n}(s) & \sim \prod_{\ell \in \mathcal{N}_k}\left[\zeta_{\ell, n}(s)\right]^{c_{\ell k}}
\end{align}
 \begin{align}
 \label{eq:eta3}
    \eta_{k,n}(s) =
\begin{cases} 
1, & s =\arg\max\limits_{s'\in \mathcal{S}} \widetilde{\eta}_{k,n}(s'), \\
0, & \text{otherwise}.  
\end{cases} 
\end{align}
where $\sigma$ is chosen as in \eqref{eq:int12}. 
At first glance, it may seem that state estimation is performed twice in each iteration. However, this is only true for the first iteration. In subsequent iterations, the current state is simply assigned as the next state from the previous iteration (step \eqref{alg:assign} in Algorithm \ref{Algorithm1}). Thus, state estimation actually occurs only once per iteration.

The full listing of the MARL+SL method—including the replacement of the feature vectors $\mu_n$ with belief vectors $\mu_{k,n}$, along with the steps for state estimation—is given in Algorithm~\ref{Algorithm1}. 
\begin{algorithm}
\caption{MARL+SL strategy for POMDPs}
\label{Algorithm1}
\begin{algorithmic}[1]
\small
\State \textbf{Initialize parameters:} $\lambda\in (0,1)$, $\zeta\in (0,1)$, $\gamma\in (0,1)$, $\omega_{k,0}(s)=0$, $\theta_{k,0}^a(s)=0$, $\rho_{k,0}$, $\widetilde{\mu}_{k,0} = \frac{1}{S}$, $\eta_{k,0} = \frac{1}{S}$, $\forall s \in \mathcal{S}, k \in \mathcal{K}$
\For{$n = 0, 1, 2, \dots$}
    \State Each agent $k$ takes action $a_{k,n} \sim q(a |\mu_{k,n})$
    \State Each agent $k$ receives reward $r_{k,n}$
    \State Each agent $k$ receives observation $\xi_{k,n+1}$
    \State Each agent $k$ estimates $\eta_{k,n}$ using equations \eqref{eq:eta1}--\eqref{eq:eta3}
    
    \ForAll{agents $k$}
        \begin{align}
        &\delta_{k,n} = r_{k,n} + \gamma \omega_{k,n}^T \eta_{k,n} - \omega_{k,n}^T \mu_{k,n}
        \label{alg:delta}\\
        &\widetilde{\omega}_{k,n} = \omega_{k,n} + \beta_n \rho_n \delta_{k,n} \mu_{k,n}
        \label{alg:omega1}\\
        &\Psi_{k,n}^{a} = \nabla_{\theta^{a}_{k,n}} \log g(a_{k,n} \vert \mu_{k,n}), \forall a \in \mathcal{A}
        \label{alg:Psi}\\
        &\theta_{k,n+1}^a = \theta_{k,n}^a + \beta_{\theta,n} \rho_n \delta_{k,n} \Psi_{k,n}^{a}, \forall a \in \mathcal{A}
        \label{alg:theta}
        \end{align}
    \EndFor

    \ForAll{agents $k$}
        \begin{equation}
        \hspace*{-1.71cm}\omega_{k,n+1} = \sum_{\ell \in \mathcal{N}_k} c_{\ell,k} \widetilde{\omega}_{\ell,n}
        \label{alg:omega2}
        \end{equation}
    \EndFor

    \ForAll{agents $k$}
        \begin{equation}
        \hspace*{-3.4cm} \mu_{k,n} = \eta_{k,n}
        \label{alg:assign}
        \end{equation}
    \EndFor
\EndFor
\end{algorithmic}
\end{algorithm} 
\subsection{Error probability measure}
Any estimation inevitably involves some errors. In our case, we estimate the global state, which in turn influences the updates of other reinforcement learning variables. Therefore, it is crucial to ensure that the state estimation error remains sufficiently small to guarantee that the resulting policies remain near-optimal with a certain level of accuracy. To this end, in the following, we analyze the error probability associated with the state estimation procedure described in \eqref{belief_update}--\eqref{eq:int13}.\par 
Due to the assumption that the agents' behavioral policies \( b_k \) are time-invariant and the environment follows a Markov decision process, the state transitions form a Markov chain \cite{Sutton}. Therefore, for any two distinct global states \( s, s' \in \mathcal{S} \), the state transition dynamics are modeled as:
\begin{equation}
\mathbb{P}\left[s_n=s^{'} \vert s_{n-1}=s\right]=\left\{\begin{array}{ll}
1-\varepsilon q_{s s}, & \text { if } s=s^{'} \\
\varepsilon q_{s s^{'}}, & \text { if } s \neq s^{'}
\end{array}\right.
\end{equation}
where \(\varepsilon\) denotes a \textit{drift parameter}, which defines the pace of state changes, i.e., $0<q_{ss}<1/{\varepsilon}$ and 
\label{eq:trans_prob_MC}
\begin{align}
    \sum_{s\neq s^{'}} q_{ss'}= q_{ss}
\end{align}
We also introduce the notation for the error probability 
\(p_{k,n}\) defined by 
\begin{align}
p_{k, n} \triangleq \mathbb{P}\left[\arg \max _{s \in \mathcal{S}} \widetilde{\mu}_{k, n}(s) \neq s_n\right]
\end{align}
The work \cite{Malek2} established that, under Assumptions~\ref{assumption:belief_init}–\ref{assumption:statistical_model} stated below, and provided the parameter $\sigma$ satisfies
\begin{align}
\sigma = \frac{\nu}{\log{\frac{1}{\epsilon}}}, \label{eq:1int12}
\end{align}
for some constant $\nu$ such that $0 < \nu < \Phi_{\mathrm{min}}$, where $\Phi_{\mathrm{min}}$ denotes the error exponent defined in~\cite{Malek2}, the asymptotic error probability satisfies the bound
\begin{align}
\limsup_{n \rightarrow \infty} p_{k,n} \leq \kappa~ \varepsilon \log{\frac{1}{\varepsilon}} + o\left(\varepsilon \log{\frac{1}{\varepsilon}}\right), \label{eq:1int11}
\end{align}
where $\kappa$ is a constant also characterized in~\cite{Malek2}.

Equations~\eqref{eq:1int12}--\eqref{eq:1int11} show that,
with an appropriate choice of the adaptation parameter~$\sigma$, the state‐estimation error can be kept tightly bounded when the drift parameter~$\varepsilon$ is small. This property is particularly useful, as discussed further, for managing the estimation error in critic parameters $\omega_{k,n}$.   
\par
The results in \eqref{eq:1int12}--\eqref{eq:1int11} represent the asymptotic bound for \( p_{k,n} \), while the exact expression for the non-asymptotic bound can be found in \cite{Malek2}. For clarity of presentation, we do not reproduce the lengthy expressions here and instead denote the bound by \( B_{\mu,n}(\sigma, \varepsilon) \), which depends on the time index \( n \), the adaptation parameter \( \sigma \), and the state dynamics parameter \( \varepsilon \). To simplify notation, we suppress the dependence on \( \sigma \) and \( \varepsilon \), and refer to the bound as \( B_{\mu,n} \).

\begin{assumption}[\textbf{Belief vector initialization}]
\label{assumption:belief_init}
For all agents \( k \in \mathcal{K} \), all states are equiprobable initially, i.e.,  
\begin{align}
    \widetilde{\mu}_{k,0}(s) = \frac{1}{S}, \quad \forall s \in \mathcal{S}.
\end{align}
\end{assumption}

\begin{assumption}[\textbf{Bounded likelihood functions}]
There exists a constant $C$, such that for all $s \in \mathcal{S}$ and for all $k \in \mathcal{K}$
\begin{align}
&\sup \limits_{\xi \in \mathcal{O}_{\ell}}\left|\log \frac{\mathcal{L}_k\left(\xi \vert s^{\prime}\right)}{\mathcal{L}_k(\xi \vert s)}\right| < C
\end{align} 
\end{assumption}
\begin{assumption}[\textbf{Global identifiability}]
For any two distinct states $s\in \mathcal{S}$ and $s^{\prime}\in \mathcal{S}$ there is at least one agent $k$ for which 
    \begin{align}
&\mathbb{E}_{s^{\prime}}\left[\log \frac{\mathcal{L}_k\left(\xi \vert s^{\prime}\right)}{\mathcal{L}_k(\xi \vert s)}\right]>0
    \end{align}
\end{assumption}
\begin{assumption}[\textbf{Statistical model}]
\label{assumption:statistical_model}
Let \( \xi_n \triangleq \{\xi_{k,n}\}_{k=1}^{K} \) and let \( \mathcal{L} \) denote the joint likelihood function. Conditioned on the value of the state at time $n$, the current observations $\xi_n$ are independent of any past states or observations and independent over space. Therefore, the joint likelihood at time $n$ satisfies
\begin{align}
&\mathcal{L}\left(\xi_n \vert s_n, \ldots, s_0, \xi_{n-1}, \ldots, \xi_0\right)=\mathcal{L}\left(\xi_n \vert s_n\right)\\
&\mathcal{L}\left(\xi_n \vert s_n\right)=\prod_{k=1}^K \mathcal{L}_k\left(\xi_{k, n} \vert s_n\right)
\end{align} \qed
\end{assumption}
Let \(\mu_n^{\star}\) denote the true belief vector at time \(n\), where all elements are zero except for the one corresponding to the current true state \(s_n\), which is set to \(1\). We define the difference between the true belief vector and the estimated one as:  
\begin{align}
    \Delta \mu_{k,n} = \mu_n^{\star} - \mu_{k,n}
\end{align}
Since the critic and actor parameters are continuous-valued vectors, the analysis requires bounding the first- and second-order moments of the error,  
\( \mathbb{E}\|\Delta \mu_{k,n}\| \) and \( \mathbb{E}\|\Delta \mu_{k,n}\|^2 \), rather than relying solely on error probabilities \( p_{k,n} \). The hard assignment step \eqref{eq:int13} is introduced to facilitate the derivation of these bounds, which would otherwise be challenging to establish.
To illustrate this point, we define the following complementary events.
\begin{align}
    \phi_{k,n} \triangleq \left(\arg\max\limits_{s\in \mathcal{S}} \widetilde{\mu}_{k,n} = s_{n} \right),\label{eq:int14}\\
    \widetilde{\phi}_{k,n} \triangleq \left(\arg\max\limits_{s\in \mathcal{S}} \widetilde{\mu}_{k,n} \neq s_{n} \right). \label{eq:int15}
\end{align}
Since the events $\phi_{k,n}$ and $\widetilde{\phi}_{k,n}$ are complementary, we can apply the law of total expectation to obtain the following bound:   
\begin{align}
\mathbb{E} (\|\Delta \mu_{k,n}\|) &= \mathbb{E} (\|\Delta \mu_{k,n}\||\widetilde{\phi}_{k,n}) p_{k,n} \nonumber\\
& \quad + \mathbb{E} (\|\Delta \mu_{k,n}\||{\phi}_{k,n}) (1 - p_{k,n}) 
 \nonumber\\
&\stackrel{(a)}{=} \mathbb{E} (\|\Delta \mu_{k,n}\||\phi_{k,n}) p_{k,n} \leq \sqrt{2} p_{k,n}
\end{align}
where $(a)$ follows since $\|\Delta \mu_{k,n}\|=0$ given $\phi$ is true. Step $(a)$ would not hold if we directly use \( \widetilde{\mu}_{k,n} \) instead of \( \mu_{k,n} \), since obtaining a bound in terms of \( p_{k,n} \) would become nontrivial. \par 
Up to this point, we presented the proposed MARL+SL method and the error probability associated with adaptive social learning. In the next section, we introduce the main theoretical results that describe how state estimation using SL impacts other variables in the multi-agent reinforcement learning framework, and we verify the effectiveness of the proposed method.
\section{Theoretical results}
\label{sec:theory}
Before delving into the main theoretical results, we first outline several additional assumptions that underpin our analysis.
\begin{assumption}[\textbf{Learning rate}]
\label{assumption:learning rate}
The learning rates $\beta_n$ and $\beta_{\theta,n}$ used in \eqref{alg:omega1} and \eqref{alg:theta} satisfy
    \begin{align}
        \sum\limits_{n=1}^{\infty} \beta_{n}= \frac{C_1}{\epsilon \log{\frac{1}{\epsilon}}}~, \quad 
        \sum\limits_{i=0}^{\infty} \beta_{\theta, n} < C_2,  
    \end{align} 
where $C_1$ and $C_2$ are positive constants. \\ \qed
\end{assumption}
Although not explicitly indicated, the critic learning rate $\beta_n$ is a function of the drift parameter $\varepsilon$, since its infinite sum depends on $\varepsilon$. For notational simplicity, this dependence is omitted throughout the paper.

Traditionally, it is common to choose learning rates such that the corresponding sums diverge, i.e., $\sum_n \beta_n = \infty$, in order to ensure that the learning process does not stagnate. However, in our setting, we instead allow this sum to converge to a finite value that itself diverges as $\varepsilon \to 0$. This choice is motivated by the regime of interest, namely, the rare-transition regime, where $\varepsilon$ is small. In this regime, the sum $\frac{1}{\varepsilon \log(1/\varepsilon)}$ becomes large, thereby ensuring that the effective learning rates remain significant.

This choice is consistent with similar approaches in the social learning literature~\cite{Malek, Malek2}, where algorithmic parameters are designed to account for the latent state dynamics. In particular, the adaptation parameter $\sigma$ in~\eqref{eq:int12} is chosen as a function of $\varepsilon$ to reflect the underlying Markovian drift. Likewise, in our reinforcement learning framework, we impose that the learning rates scale with $\varepsilon$ to ensure proper adaptation and tracking of the latent state, thereby justifying Assumption~\ref{assumption:learning rate}.\par 
With regard to the actor learning rate $\beta_{\theta,n}$, the error in the actor parameters depends on the critic parameter errors, as well as on other quantities influenced by state estimation errors. To effectively suppress this error, the actor step size must decay faster than that of the critic. Accordingly, we assume that $\beta_{\theta,n}$ is chosen to be summable, i.e., $\sum_n \beta_{\theta,n} < \infty$, and independent of $\varepsilon$, to prevent divergence as $\varepsilon \to 0$. Furthermore, we assume that the constant $C_2$ is finite but sufficiently large to avoid premature vanishing of the policy gradient updates.

\begin{assumption}[\textbf{Importance sampling ratio}]
\label{assumption:policies}
We assume that the function $h$, which approximates the individual importance sampling ratios $\rho_{k,n}$ and is defined in \eqref{eq:ind_rho}, satisfies the following Lipschitz property. For any two belief vectors $\mu_1$ and $\mu_2$:
    \begin{align}
\|h(\mu_1)- h(\mu_2)\|\leq B \|\mu_1-\mu_2\| 
    \end{align}
    for some finite $B>0$. 
\end{assumption}
\begin{assumption}[\textbf{Policy function}]
\label{assumption:policy gradient}
We assume that, for any two belief vectors \( \mu_1 \) and \( \mu_2 \)  and $\forall 
 a$ and $a'\in \mathcal{A}$, the policy function $g= g(a'|\mu; \{\theta^{\bar{a}}\}_{\bar{a}\in\mathcal{A}})$ satisfies the following two properties:
    \begin{align}
\|\nabla_{\theta^{a}}\log g (\cdot|
\mu_1)- \nabla_{\theta^a}\log g(\cdot|\mu_2)\|&\leq D \|\mu_1-\mu_2\| \\
    \|\nabla_{\theta^a}\log g \|&\leq E    
    \end{align}
    where $D,E>0$ are constants.\\ \qed
\end{assumption}
 Assumption \ref{assumption:policies} is required in the analysis in order to express the errors in estimating the importance sampling ratio in terms of state estimation errors. Similarly, Assumption \ref{assumption:policy gradient} is introduced to relate policy gradient errors to state estimation errors.  We expect these assumptions to hold for a broad class of policies. For example, the widely used Boltzmann policy model, defined in \eqref{eq:boltzman}, satisfies these assumptions, provided that the arguments of the exponential functions are Lipschitz-continuous. \par 
 \begin{assumption}[\textbf{Combination matrix}]
\label{assumption:combination matrix}
We assume that the combination matrix $C$ satisfies the following property
    \begin{equation}
\rho\left(\mathbb{E}\left[C^T\left(I-\mathds{11}^T / K\right) C\right]\right)< 1
\end{equation}
where $\rho(\cdot)$ denotes the spectral norm of a matrix.\\ \qed 
\end{assumption}
Assumption \ref{assumption:combination matrix} is taken from the MAOPAC setting \cite{SUTTLE20201549}. It is a direct consequence of the strong connectivity assumption on \( G \). This assumption holds if, and only if, the underlying communication graph is connected \cite{Boyd}. \par 

We validate the proposed scheme by comparing performance under two settings: one with full observability, where the true state is known, and another with partial observability, where the true state is unknown and estimated using the social learning scheme described in \eqref{belief_update}--\eqref{eq:int13}. By analyzing the difference between these two scenarios, we aim to characterize the impact of state estimation error on the reinforcement learning component. Our goal is to show that the performance gap between the two settings is both bounded and controllable.

To support this claim, Lemma~\ref{lemma1} first establishes the stability of the learning process by demonstrating that all agents in the network reach consensus in estimating the critic parameters. 

 \begin{lemma}[\textbf{Network stability}]
\label{lemma1}
Let 
\begin{align}
    \mathcal{W}_n & \triangleq \operatorname{col}\left\{\omega_{1, n}, \ldots, \omega_{K,n }\right\} \\ 
    \omega_{c, n}& = \sum_{k=1}^K \omega_{k, n}\\
    \mathcal{W}_{c, n} &\triangleq \mathbbm{1}_K \otimes \omega_{c, n}=\left(\frac{1}{K} \mathbbm{1}_K \mathbbm{1}_K^{T} \otimes I\right) \mathcal { W }_n
\end{align}
Then, we have 
\begin{align}
\|\Delta \mathcal{W}_{c,n}\| \triangleq \|\mathcal{W}_n- \mathcal{W}_{c,n}\|  \text{ converges to } 0 \text{ a.s.}
\end{align}\qed
\end{lemma}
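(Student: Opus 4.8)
The plan is to analyze the critic recursion \eqref{alg:omega1}--\eqref{alg:omega2} in stacked network form and show that the component of the critic estimates orthogonal to the consensus direction is contracted by the combination step, while the stochastic forcing injected by the local temporal-difference updates is kept controllable by the critic step-size schedule. First I would collect the local updates into a single network recursion. Writing $\mathcal{U}_n \triangleq \operatorname{col}\{\rho_n\delta_{k,n}\mu_{k,n}\}_{k=1}^K$ and using $[C^T]_{k,\ell}=c_{\ell,k}$, the adapt-then-combine pair becomes $\mathcal{W}_{n+1} = (C^T\otimes I)(\mathcal{W}_n + \beta_n\mathcal{U}_n)$. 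I would then apply the disagreement projection $\mathcal{P}\triangleq (I_K - \tfrac1K\mathbbm{1}_K\mathbbm{1}_K^T)\otimes I$. Because $C$ is doubly stochastic, $\mathcal{J}C=C\mathcal{J}=\mathcal{J}$ with $\mathcal{J}\triangleq\tfrac1K\mathbbm{1}_K\mathbbm{1}_K^T$, so $\mathcal{P}$ commutes with $(C^T\otimes I)$, yielding the closed recursion $\Delta\mathcal{W}_{c,n+1} = (C^T\otimes I)\Delta\mathcal{W}_{c,n} + \beta_n (C^T\otimes I)\mathcal{P}\mathcal{U}_n$ for $\Delta\mathcal{W}_{c,n}=\mathcal{P}\mathcal{W}_n$.

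Second -- and this is the step I expect to be the main obstacle -- I would establish uniform boundedness of the forcing term $\mathcal{U}_n$. This is delicate because $\delta_{k,n}$ depends, through $\omega_{k,n}^T\eta_{k,n}$ and $\omega_{k,n}^T\mu_{k,n}$, on the very iterate we are trying to control, so the bound is self-referential. I would exploit that the hard-assignment steps \eqref{eq:int13} and \eqref{eq:eta3} make $\mu_{k,n}$ and $\eta_{k,n}$ one-hot, hence $\|\mu_{k,n}\|=\|\eta_{k,n}\|=1$, which gives $|\delta_{k,n}|\le R_{\max}+(1+\gamma)\|\omega_{k,n}\|$. Combining this with $\rho_n\le\rho_{\max}^K$, the convexity of the combination weights (columns of $C$ sum to one), and a Gr\"onwall-type estimate of the form $\max_k\|\omega_{k,n}\|\le(\cdots)\exp(b\sum_{j}\beta_j)$, the \emph{summability} of $\beta_n$ guaranteed by Assumption~\ref{assumption:learning rate} (namely $\sum_n\beta_n = C_1/(\varepsilon\log\tfrac1\varepsilon)<\infty$) produces a deterministic constant $U$ with $\|\mathcal{U}_n\|\le U$ for all $n$, almost surely. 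It is precisely this summable-step-size regime that makes the iterate energy finite and closes the self-referential loop.

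Third, I would invoke Assumption~\ref{assumption:combination matrix}. Restricting to the disagreement subspace (the range of $\mathcal{P}$), one has $\mathcal{P}(CC^T-\mathcal{J})\mathcal{P}=CC^T-\mathcal{J}$, whose spectral radius coincides with $\rho(\mathbb{E}[C^T(I-\mathcal{J})C])<1$ since $C^TC$ and $CC^T$ share eigenvalues and both equal $1$ along $\mathbbm{1}_K$. Hence $(C^T\otimes I)$ contracts this subspace with factor $\lambda\triangleq\sqrt{\rho(\mathbb{E}[C^T(I-\mathcal{J})C])}<1$. Substituting into the disagreement recursion gives $\|\Delta\mathcal{W}_{c,n+1}\|\le \lambda\|\Delta\mathcal{W}_{c,n}\|+\lambda\beta_n U'$, and unrolling yields $\|\Delta\mathcal{W}_{c,n}\|\le\lambda^n\|\Delta\mathcal{W}_{c,0}\|+\lambda U'\sum_{j=0}^{n-1}\lambda^{\,n-1-j}\beta_j$. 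The first term decays geometrically, and the second is a convolution of a geometric kernel with the null sequence $\beta_j\to0$ (a consequence of $\sum_j\beta_j<\infty$), so it also vanishes. This proves $\|\Delta\mathcal{W}_{c,n}\|\to0$ pathwise, hence almost surely.

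If instead $C$ is to be treated as random (as the expectation in Assumption~\ref{assumption:combination matrix} suggests), I would replace the pathwise contraction with a conditional mean-square contraction: taking $\mathbb{E}[\,\cdot\mid\mathcal{F}_n]$, applying Assumption~\ref{assumption:combination matrix} to the quadratic form, and handling the cross term with Young's inequality gives $d_{n+1}\le\lambda' d_n+M\beta_n^2$ for $d_n\triangleq\mathbb{E}\|\Delta\mathcal{W}_{c,n}\|^2$ and some $\lambda'<1$. Since $\sum_j\beta_j^2<\infty$, unrolling shows $\sum_n d_n<\infty$, so by monotone convergence $\sum_n\|\Delta\mathcal{W}_{c,n}\|^2<\infty$ almost surely, which again forces $\|\Delta\mathcal{W}_{c,n}\|\to0$ a.s. Either route reaches the claimed conclusion, with the boundedness argument of the second paragraph being the load-bearing technical step.
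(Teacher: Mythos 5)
Your proposal is correct, and it shares the paper's skeleton (stacked recursion $\mathcal{W}_{n+1}=(C^T\otimes I)(\mathcal{W}_n+\beta_n\mathcal{U}_n)$, projection onto the disagreement subspace, contraction from Assumption~\ref{assumption:combination matrix}), but the two load-bearing steps are resolved differently. First, where you prove uniform boundedness of the forcing term deterministically --- unit-norm hard-assigned beliefs, $\rho_n\le\rho_{\max}^K$, convexity of the combination step, and a discrete Gr\"onwall bound $\max_k\|\omega_{k,n}\|\le(\cdots)\exp\bigl(b\sum_j\beta_j\bigr)<\infty$ exploiting $\sum_n\beta_n<\infty$ --- the paper instead truncates with the indicator $\mathbb{I}_{\{\sup_{t\le n}\|\mathcal{W}_t\|\le M\}}$ and bounds the conditional second moment of the forcing term only on that localization event; the a.s.\ boundedness needed to remove the indicator is deferred to a separate lemma (Lemma~\ref{lemma:bounded critic}), so your route is more self-contained on this point. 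Second, your main convergence argument is pathwise: geometric contraction with factor $\lambda<1$ plus the Toeplitz-type fact that the convolution of a geometric kernel with the null sequence $\beta_j\to 0$ vanishes. The paper instead works in conditional mean square: it divides the recursion by the step size, invokes an additional (unstated in Assumption~\ref{assumption:learning rate}) ratio condition $\rho\beta_n/\beta_{n+1}\le\alpha_0<1$, deduces $\mathbb{E}\bigl[\|\Delta\mathcal{W}_{c,n}\|^2\mid\mathcal{F}_{n-1}\bigr]\le K\beta_n^2$, and sums to get $\sum_n\|\Delta\mathcal{W}_{c,n}\|^2<\infty$ a.s. Your fallback mean-square argument ($d_{n+1}\le\lambda' d_n+M\beta_n^2$, then summability) is essentially the paper's proof, minus the step-size ratio condition. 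What each buys: your pathwise version is more elementary and avoids that extra ratio requirement on $\{\beta_n\}$; the paper's conditional-expectation treatment is the natural one if $C$ is genuinely random (as the expectation in Assumption~\ref{assumption:combination matrix} suggests), a case your primary route handles only through your explicitly flagged alternative.
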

\noindent The proof of Lemma~\ref{lemma1} is presented in Appendix \ref{appendix:lemma1}. The lemma shows that the critic parameters by the individual agents converge almost surely to the same value and this value agrees with their centroid. Given this consensus among agents, we next analyze how closely this centroid approximates the optimal solution. To this end, we define the critic parameter $\omega_{k,n}^{\star}$, which is updated in the same manner as $\omega_{k,n}$ but uses the true belief vectors $\mu_{n}^{\star}$ and $\eta_{n}^{\star}$ instead of the estimated ones, $\mu_{k,n}$ and $\eta_{k,n}$. In Theorem \ref{theorem1} we bound the difference between the network centroid $\mathcal{W}_{c,n}$ and the critic parameter learned under full observability, denoted by $\mathcal{W}_n^{\star} = \operatorname{col}\{\omega_{k,n}^{\star}\}_{k=1}^{K}$.
\begin{theorem}[\textbf{Critic error bound}] 
\label{theorem1}
Under Assumptions \ref{assumption:graph}-\ref{assumption:combination matrix}, and the linear function approximation for state values given in \eqref{eq:linear_approx}, the expected Euclidean distance between the network average $\mathcal{W}_{c,n}$ and the critic parameter under full observation $\mathcal{W}_{n}^{\star}\triangleq \operatorname{col}\{\omega_{k,n}\}_{k=1}^{K}$ is upper bounded by 
\begin{equation}
\label{eq:int115}
    \mathbb{E}\left\|\mathcal{W}_{n+1}^{\star}-\mathcal{W}_{c, n+1}\right\| \leq  \Omega \sum_{i=0}^n \beta_{i} B_{\mu,i} 
\end{equation}
where, for some positive constant $B_{\omega}>0$,
\begin{align}
    \Omega &\triangleq \rho_{\max}^{K-1}BK (1+\gamma) B_{\omega}+\rho_{\max}2(1+\gamma)B_{\omega} \nonumber\\
&\quad+\rho_{\max}^{K-1}BK^2R_{\max}  +7\rho_{\max}KR_{\max}\label{eq:int Phi}
\end{align}
\qed
\end{theorem}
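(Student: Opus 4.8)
The plan is to reduce the vector-level comparison to a comparison of the two processes' \emph{centroids}, and then to bound the centroid gap by telescoping a one-step recursion whose driving term is controlled by the belief-estimation error. First I would exploit Lemma~\ref{lemma1}: since the partial-observation iterates $\omega_{k,n}$ converge to their centroid, the mismatch between the diffusion-weighted average $\sum_\ell c_{\ell,k}\widetilde\omega_{\ell,n}$ that defines $\omega_{k,n+1}$ in \eqref{alg:omega2} and the uniform average $\tfrac1K\sum_\ell\widetilde\omega_{\ell,n}=\omega_{c,n+1}$ is itself a consensus residual that vanishes almost surely. This lets me replace the target $\|\mathcal{W}_{n+1}^{\star}-\mathcal{W}_{c,n+1}\|$ by the gap between the full-observation centroid $\omega_{c,n}^{\star}\triangleq\tfrac1K\sum_k\omega_{k,n}^{\star}$ and the partial-observation centroid $\omega_{c,n}$, up to the Lemma~\ref{lemma1} residual.

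Next I would derive the centroid recursions. Because $C$ is doubly stochastic, averaging the diffusion step over $k$ removes the combination matrix, giving
\[
\omega_{c,n+1}=\omega_{c,n}+\tfrac{\beta_n}{K}\sum_{\ell}\rho_n\delta_{\ell,n}\mu_{\ell,n},\qquad \omega_{c,n+1}^{\star}=\omega_{c,n}^{\star}+\tfrac{\beta_n}{K}\sum_{\ell}\rho_n^{\star}\delta_{\ell,n}^{\star}\mu_n^{\star}.
\]
Subtracting and using the identical (zero) initialization, I would telescope to obtain $\omega_{c,n+1}^{\star}-\omega_{c,n+1}=\tfrac1K\sum_{i=0}^n\beta_i\sum_\ell(u_{\ell,i}^{\star}-u_{\ell,i})$, where $u_{\ell,i}^{\star}\triangleq\rho_i^{\star}\delta_{\ell,i}^{\star}\mu_i^{\star}$ and $u_{\ell,i}\triangleq\rho_i\delta_{\ell,i}\mu_{\ell,i}$. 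The heart of the proof is the per-step bound on $\mathbb{E}\|u_{\ell,i}^{\star}-u_{\ell,i}\|$, which I would split by adding and subtracting into (i) the importance-ratio error $(\rho_i^{\star}-\rho_i)\delta_{\ell,i}^{\star}\mu_i^{\star}$, (ii) the TD-error difference $\rho_i(\delta_{\ell,i}^{\star}-\delta_{\ell,i})\mu_i^{\star}$, and (iii) the belief difference $\rho_i\delta_{\ell,i}(\mu_i^{\star}-\mu_{\ell,i})$. For (i) I would write $\rho_i^{\star}-\rho_i$ as a telescoping product of the individual ratios and apply the Lipschitz bound on $h$ from Assumption~\ref{assumption:policies}, producing the factor $\rho_{\max}^{K-1}BK$; bounding $|\delta_{\ell,i}^{\star}|\le R_{\max}+(1+\gamma)B_\omega$ and $\|\mu_i^{\star}\|=1$ then yields the $\rho_{\max}^{K-1}BK(1+\gamma)B_\omega$ and $\rho_{\max}^{K-1}BK^2R_{\max}$ contributions to $\Omega$. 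For (iii) the same TD-error bound together with $\mathbb{E}\|\mu_i^{\star}-\mu_{\ell,i}\|=\mathbb{E}\|\Delta\mu_{\ell,i}\|\le\sqrt2\,p_{\ell,i}\le\sqrt2\,B_{\mu,i}$ gives the $\rho_{\max}R_{\max}$ and $\rho_{\max}(1+\gamma)B_\omega$ terms. Throughout I would invoke $\rho_{\min}\le\rho_{k,n}\le\rho_{\max}$, the hard-assignment bounds $\|\mu\|,\|\eta\|\le1$, and the belief-error bound in expectation.

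The main obstacle is piece (ii): expanding $\delta_{\ell,i}^{\star}-\delta_{\ell,i}=\gamma\big((\omega_{\ell,i}^{\star})^T\eta_i^{\star}-\omega_{\ell,i}^T\eta_{\ell,i}\big)-\big((\omega_{\ell,i}^{\star})^T\mu_i^{\star}-\omega_{\ell,i}^T\mu_{\ell,i}\big)$ yields, besides the clean belief-error terms $(1+\gamma)B_\omega(\|\Delta\mu_{\ell,i}\|+\|\Delta\eta_{\ell,i}\|)$, a coupling term $(1+\gamma)\|\omega_{\ell,i}^{\star}-\omega_{\ell,i}\|$ that feeds back into the very error being bounded, making the recursion self-referential. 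I would close it by (a) first establishing uniform boundedness $\|\omega_{k,n}\|,\|\omega_{k,n}^{\star}\|\le B_\omega$ so the coefficients above are legitimate constants, and (b) applying a discrete Gr\"onwall argument to the resulting inequality of the form $a_{n+1}\le a_n(1+c\beta_n)+\beta_n b_n$ with $b_i\propto B_{\mu,i}$. The crucial point is that Assumption~\ref{assumption:learning rate} makes $\sum_n\beta_n=\tfrac{C_1}{\varepsilon\log(1/\varepsilon)}$ \emph{finite}, so the Gr\"onwall amplification $\exp(c\sum_n\beta_n)$ is a finite constant absorbable into $\Omega$; this is precisely why a summable critic step size is imposed. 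Collecting all constants together with the Lemma~\ref{lemma1} consensus residual then yields \eqref{eq:int115} with $\Omega$ as in \eqref{eq:int Phi}, where the remaining factors of $K$ and the numeric constant $7$ arise from summing the per-agent bounds over $\ell$ and consolidating the several $\rho_{\max}R_{\max}$-type terms.
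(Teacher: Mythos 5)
Your overall architecture (reduce to a single recursion, telescope, and split the per-step mismatch into importance-ratio error, TD-error difference, and belief error) parallels the paper's proof, and your treatment of pieces (i) and (iii) reproduces the corresponding terms of $\Omega$. The fatal problem is your closure of piece (ii). You bound the coupling term $(1+\gamma)\|\omega^{\star}_{\ell,i}-\omega_{\ell,i}\|$ in absolute value and invoke a discrete Gr\"onwall inequality on $a_{n+1}\le(1+c\beta_n)a_n+\beta_n b_n$, asserting that the amplification $\exp\left(c\sum_n\beta_n\right)$ is a finite constant absorbable into $\Omega$. It is not: under Assumption~\ref{assumption:learning rate}, $\sum_n\beta_n=C_1/(\varepsilon\log(1/\varepsilon))$, so your factor is $\exp\left(cC_1/(\varepsilon\log(1/\varepsilon))\right)$, which depends on $\varepsilon$ and diverges as $\varepsilon\to0$. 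The theorem claims the bound with an explicit constant $\Omega$ that is independent of $\varepsilon$, and the whole point of the result---combined with Lemma~\ref{lemma:finiteness}---is that $\Omega\sum_i\beta_iB_{\mu,i}$ remains bounded as $\varepsilon\to 0$; your bound blows up precisely in that regime, so it does not establish the stated theorem. The paper avoids Gr\"onwall entirely by never splitting the coupling term: it keeps the TD update in linear form, writing $\rho_n\delta_{k,n}\mu_{k,n}$ through $d_{k,n}-H_{k,n}\omega_{k,n}$ with $H_{k,n}=\mu_{k,n}(\mu_{k,n}-\gamma\eta_{k,n})^{T}$, so that the error satisfies $\Delta\mathcal{W}_{n+1}=(I-\beta_n\rho_n^{\star}\mathcal{H}_n^{\star})\Delta\mathcal{W}_n+\beta_n(\text{perturbations})$, and then uses the rank-one structure of $H_n^{\star}$, the fact that hard assignment makes the beliefs basis vectors, $\gamma<1$, and the step-size condition $\beta_n\le 1/\rho_{\max}$ to argue $\|I-\beta_n\rho_n^{\star}\mathcal{H}_n^{\star}\|\leq 1$. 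The coefficient multiplying the previous error is then exactly one, the recursion telescopes to $\sum_i\beta_i\Phi_i$, and no exponential factor appears. Your term-by-term bounding of $\gamma\eta^{T}\Delta\omega-\mu^{T}\Delta\omega$ discards exactly this non-expansion structure, which is what creates the self-referential recursion you cannot close with an $\varepsilon$-uniform constant.

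Two secondary issues. First, your expansion of $\delta^{\star}_{\ell,i}-\delta_{\ell,i}$ contains only the value-function terms and drops the reward difference $r^{\star}_{\ell,i}-r_{\ell,i}$; under partial observability the two processes take different actions whenever the state estimate is wrong, so the rewards differ, and the paper bounds $\mathbb{E}\|\Delta r_{k,i}\|$ via the law of total expectation together with the simplifying assumption of deterministic behavioral policies---this is precisely the origin of the $7\rho_{\max}KR_{\max}$ term in $\Omega$, which your decomposition cannot produce by merely ``consolidating'' constants. Second, your opening reduction leans on Lemma~\ref{lemma1}, which only gives almost-sure vanishing of the consensus residual with no rate, so that residual cannot be folded into the quantitative bound $\Omega\sum_{i=0}^{n}\beta_iB_{\mu,i}$ as stated; the paper instead applies the averaging operator directly to the diffusion recursion and never needs a rate for the consensus error.
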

\begin{figure*}[h]
\centering
\begin{tikzpicture}
    \node at (0,0) {
        \begin{minipage}[b]{0.33\linewidth}
          \centering
          \includegraphics[width=5cm]{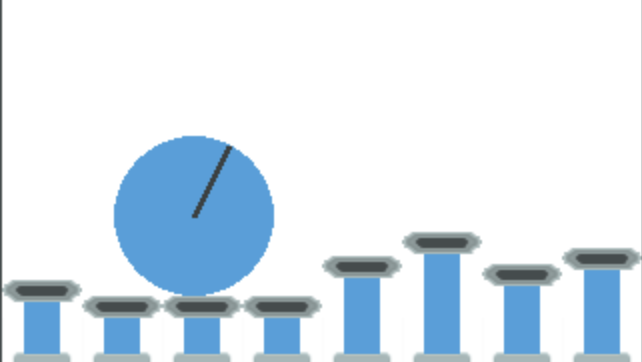}
          \centerline{(a) Phase 3}\medskip
        \end{minipage}
    };
    \node at (5.5,0) {
        \begin{minipage}[b]{0.33\linewidth}
          \centering
          \includegraphics[width=5cm]{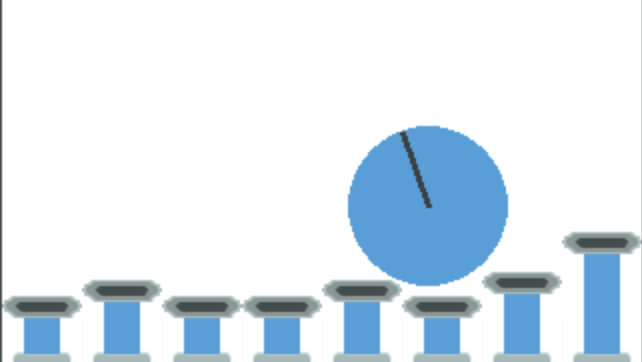}
          \centerline{(b) Phase 2}\medskip
        \end{minipage}
    };
    \node at (11,0) {
        \begin{minipage}[b]{0.33\linewidth}
          \centering
          \includegraphics[width=5cm]{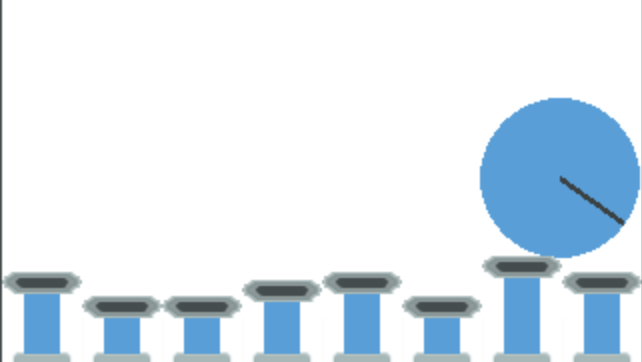}
          \centerline{(c) Phase 1}\medskip
        \end{minipage}
    };

    \draw[-{Triangle[open]}, line width=1mm, draw=black] (6.5,1.8) -- (4.5,1.8);
\end{tikzpicture}

\caption{Illustration of the target ball dynamics in the piston-ball environment.}
\label{fig:piston}
\end{figure*}
The proof of Theorem \ref{theorem1} is given in Appendix \ref{appendix:theorem1}. The bound in \eqref{eq:int115} is expressed as a summation over a decaying term, due to the diminishing step size $\beta_i$. This summation can be further controlled by the state estimation error bound $B_{\mu,i}$, which depends on both the drift parameter $\varepsilon$ and the adaptation parameter $\sigma$. Leveraging these properties, Lemma \ref{lemma:finiteness} establishes that the summation in \eqref{eq:int115} is finite with the proper choice of the adaptation parameter $\sigma$ as in \eqref{eq:int12} and small enough $\varepsilon$. Consequently, the critic values computed by the proposed scheme remain close to those obtained under full state observability, provided that the underlying states evolve slowly. \par 
The assumption of slow state dynamics is adequate, especially in scenarios involving partial observations. In such contexts, a sufficiently stable state over time is necessary to allow the learning algorithm to accumulate enough information for accurate parameter estimation. Importantly, this assumption does not conflict with core principles of reinforcement learning. The requirement that all states be visited infinitely often—a fundamental condition in reinforcement learning—can still be met even if state transitions occur slowly. \par 
As described in Section \ref{sec:algorithm}, the objective of actor-critic algorithms is not only to estimate state values but also to learn optimal policy parameters. To this end, we show that the expected distance between the policy parameter \( \theta_{k,n} \), computed using MARL+SL, and the corresponding parameter obtained under the standard MAOPAC algorithm (i.e., with full observability), can be made arbitrarily small under appropriate conditions. 

Specifically, let \( \theta_{k,n}^{\star} \) denote the actor parameter updated using the same procedure as in Algorithm~\ref{Algorithm1}, but with the true beliefs \( \mu_n^{\star} \) and \( \eta_n^{\star} \) replacing the estimated beliefs \( \mu_{k,n} \) and \( \eta_{k,n} \). Theorem~\ref{theorem2} provides an explicit expression for the expected error bound, along with the conditions under which this result holds.

\begin{theorem}[\textbf{Actor error}]
\label{theorem2}
Under Assumptions \ref{assumption:belief_init}-\ref{assumption:combination matrix}, 
the expected norm of the difference 
$\Delta \theta_{k,n}=\theta_{k,n}^{\star}-\theta_{k,n}$ is bounded by
\begin{align}
    \mathbb{E}\|\Delta\theta_{k,n}\|&\leq \Gamma_1 \sum_{i=0}^{n}\beta_{\theta,i} B_{\mu,i} +E\rho_{\max} \sum_{i=0}^{n} \beta_{\theta, i}\Lambda_i \label{eq:theorem2} 
\end{align}
where, for some constant $B_{\omega}^{\star}>0$, 
\begin{align}
    \Gamma_1&\triangleq (\rho_{\max}^{K-1}B K E B_{\delta}^{\star} +D\rho_{\max} B_{\delta}) \\
    B_{\delta}&\triangleq (R_{\max}+(\gamma+1) B_{\omega}) \\
   B_{\delta}^{\star} &\triangleq (R_{\max}+(\gamma+1) B_{\omega}^{\star})  \\
    \Lambda_i &\triangleq  \gamma B_{\omega}^{\star}B_{\mu, i}+(\gamma+1) \Omega \sum_{j=0}^{i}\beta_j B_{\mu,j}  
\end{align}
 \qed
\end{theorem}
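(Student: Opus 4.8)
The plan is to obtain a recursion for $\Delta\theta_{k,n}=\theta_{k,n}^{\star}-\theta_{k,n}$ by subtracting the actor update \eqref{alg:theta} from its fully-observable counterpart (the same update with $\mu_n^{\star},\eta_n^{\star},\omega_{k,n}^{\star}$ in place of $\mu_{k,n},\eta_{k,n},\omega_{k,n}$). Since both recursions start from the common initialization $\theta_{k,0}^{a}=\theta_{k,0}^{a\star}=0$, unrolling gives
\[
\Delta\theta_{k,n}^{a}=\sum_{i=0}^{n-1}\beta_{\theta,i}\bigl(\rho_i^{\star}\delta_{k,i}^{\star}\Psi_{k,i}^{a\star}-\rho_i\delta_{k,i}\Psi_{k,i}^{a}\bigr),
\]
so after the triangle inequality the task reduces to bounding $\mathbb{E}\|\rho_i^{\star}\delta_{k,i}^{\star}\Psi_{k,i}^{a\star}-\rho_i\delta_{k,i}\Psi_{k,i}^{a}\|$ at each time $i$. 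First I would record that the hard assignment \eqref{eq:int13} makes $\mu_{k,i},\eta_{k,i},\mu_i^{\star},\eta_i^{\star}$ unit basis vectors, so that $\mathbb{E}\|\Delta\mu_{k,i}\|\le\sqrt{2}\,p_{k,i}\le\sqrt{2}\,B_{\mu,i}$ (and likewise for $\Delta\eta_{k,i}$), and that the temporal-difference errors admit the uniform bounds $|\delta_{k,i}|\le B_\delta$ and $|\delta_{k,i}^{\star}|\le B_\delta^{\star}$, using the almost-sure boundedness $\|\omega_{k,i}\|\le B_\omega$, $\|\omega_{k,i}^{\star}\|\le B_\omega^{\star}$ established in the analysis of Theorem~\ref{theorem1}.

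The core device is the three-factor add-and-subtract decomposition
\[
\rho^{\star}\delta^{\star}\Psi^{\star}-\rho\delta\Psi=(\rho^{\star}-\rho)\delta^{\star}\Psi^{\star}+\rho(\delta^{\star}-\delta)\Psi^{\star}+\rho\delta(\Psi^{\star}-\Psi).
\]
For the first term I would bound the ratio gap by a telescoping product argument: since each individual ratio lies in $[\rho_{\min},\rho_{\max}]$ and $h$ is Lipschitz in the belief argument (Assumption~\ref{assumption:policies}), $|\rho_i^{\star}-\rho_i|\le\rho_{\max}^{K-1}B\sum_{k}\|\Delta\mu_{k,i}\|$; combining with $|\delta_i^{\star}|\le B_\delta^{\star}$ and $\|\Psi_i^{\star}\|\le E$ (Assumption~\ref{assumption:policy gradient}) yields the contribution $\rho_{\max}^{K-1}BKE\,B_\delta^{\star}\,B_{\mu,i}$ in expectation. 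The third term is controlled by the Lipschitz property of the score, $\|\Psi_i^{\star}-\Psi_i\|\le D\|\Delta\mu_{k,i}\|$, together with $\rho_i\le\rho_{\max}$ and $|\delta_i|\le B_\delta$, producing $D\rho_{\max}B_\delta\,B_{\mu,i}$. These two contributions combine, after summation against $\beta_{\theta,i}$, into the first summand $\Gamma_1\sum_i\beta_{\theta,i}B_{\mu,i}$ with $\Gamma_1$ as stated.

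The middle term is where the critic analysis enters and is, I expect, the main obstacle. Writing
\[
\delta_{k,i}^{\star}-\delta_{k,i}=\gamma\bigl((\omega_{k,i}^{\star})^{T}\eta_i^{\star}-\omega_{k,i}^{T}\eta_{k,i}\bigr)-\bigl((\omega_{k,i}^{\star})^{T}\mu_i^{\star}-\omega_{k,i}^{T}\mu_{k,i}\bigr),
\]
where the reward $r_{k,i}$ cancels, each bracket splits by a further add-and-subtract into a belief-error part (of the form $B_\omega^{\star}\|\Delta\eta_{k,i}\|$, resp.\ $B_\omega^{\star}\|\Delta\mu_{k,i}\|$, since the beliefs are unit-norm) and a critic-error part $\|\Delta\omega_{k,i}\|$. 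Taking expectations, the belief-error parts yield the $\gamma B_\omega^{\star}B_{\mu,i}$ term of $\Lambda_i$ (the precise coefficient following from careful bookkeeping and the state carry-over \eqref{alg:assign}), while the critic-error part requires $\mathbb{E}\|\Delta\omega_{k,i}\|$. Here I would invoke Theorem~\ref{theorem1} together with the consensus guarantee of Lemma~\ref{lemma1} to replace the per-agent critic error by the centroid bound $\Omega\sum_{j=0}^{i}\beta_j B_{\mu,j}$, which enters with coefficient $(\gamma+1)$ and reproduces the nested summation in $\Lambda_i$. Multiplying by $\rho_i\le\rho_{\max}$ and $\|\Psi_i^{\star}\|\le E$ and summing against $\beta_{\theta,i}$ produces the second summand $E\rho_{\max}\sum_i\beta_{\theta,i}\Lambda_i$, completing the bound.

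The delicate points I anticipate are: justifying the deterministic factor bounds under the expectation by propagating the almost-sure boundedness of both the starred and unstarred critic iterates; applying Assumption~\ref{assumption:policies} to the ratio gap with the policy parameters held fixed so that only the belief-sensitivity is charged; and threading the nested critic summation from Theorem~\ref{theorem1} into $\Lambda_i$ without double-counting the current-state versus next-state belief errors.
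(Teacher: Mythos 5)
Your proposal is correct and follows essentially the same route as the paper's proof: subtract the two actor recursions, unroll from the common zero initialization, apply the identical three-factor add-and-subtract decomposition $(\Delta\rho)\delta^{\star}\Psi^{\star}+\rho(\Delta\delta)\Psi^{\star}+\rho\delta(\Delta\Psi)$, bound the first and third terms via Assumptions~\ref{assumption:policies} and~\ref{assumption:policy gradient} to get $\Gamma_1$, and control $\Delta\delta_{k,i}$ by splitting into belief-error and critic-error parts, invoking Theorem~\ref{theorem1} for the latter to obtain $\Lambda_i$. Your remark that the per-agent critic error must be tied to the centroid bound via Lemma~\ref{lemma1}, and that the coefficient of $B_{\omega}^{\star}B_{\mu,i}$ in $\Lambda_i$ requires careful bookkeeping (the $\Delta\mu^{T}\omega^{\star}$ term appears to contribute an additional $B_{\omega}^{\star}B_{\mu,i}$ that the paper's stated $\gamma B_{\omega}^{\star}B_{\mu,i}$ does not reflect), is in fact more careful than the paper's own treatment of these steps.
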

\noindent The proof of Theorem \ref{theorem2} can be found in Appendix \ref{appendix:theorem2}.
The first summation term in \eqref{eq:theorem2} is bounded due to the decaying property of the step size \( \beta_{\theta, n} \), and it remains controllable because it depends on the state estimation error \( B_{\mu, i} \), which is influenced by both the adaptation parameter \( \sigma \) and the drift parameter \( \epsilon \). The second term is also bounded, as established in Lemma~\ref{lemma:finiteness}. Hence, this result justifies that, in slowly changing environments, the performance of MARL+SL can be made arbitrarily close to that of MAOPAC under full observability, given a proper choice of the adaptation parameter.
\begin{lemma}[\textbf{Finiteness of the upper bounds}]
\label{lemma:finiteness}
Under Assumptions~\ref{assumption:belief_init}–\ref{assumption:learning rate}, and provided that the adaptation parameter $\sigma$ is chosen as
\begin{align}
\sigma = \frac{\nu}{\log{\frac{1}{\epsilon}}}, \label{eq:int12}
\end{align}
for some constant $\nu$ satisfying $0 < \nu < \Phi_{\mathrm{min}}$, where $\Phi_{\mathrm{min}}$ is the error exponent defined in~\cite{Malek2}, then, for $\epsilon \to 0$,  the following series are finite:
\begin{align}
\sum_{i=0}^\infty \beta_i B_{\mu,i} < \infty\\
\sum_{i=0}^\infty \beta_{\theta, i} \Lambda_{i} < \infty
\end{align} \qed 
\end{lemma}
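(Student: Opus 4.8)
The plan is to prove both bounds by exploiting the single cancellation built into Assumption~\ref{assumption:learning rate}: the critic step sizes are chosen so that $\sum_{n}\beta_n = C_1/(\varepsilon\log\frac{1}{\varepsilon})$, which is exactly the reciprocal of the steady-state order of the belief error. I would first record the two facts about $B_{\mu,n}$ that drive everything. First, since $B_{\mu,n}$ controls an error probability (equivalently $\mathbb{E}\|\Delta\mu_{k,n}\|\le\sqrt{2}\,p_{k,n}$), it admits a universal, $\varepsilon$-independent bound $B_{\mu,n}\le B_{\max}$. Second, by the result of \cite{Malek2} invoked in \eqref{eq:1int12}--\eqref{eq:1int11}, the choice $\sigma=\nu/\log\frac{1}{\varepsilon}$ with $0<\nu<\Phi_{\mathrm{min}}$ forces the steady-state part of $B_{\mu,n}$ to be of order $\varepsilon\log\frac{1}{\varepsilon}$; concretely there exist an $\varepsilon$-independent constant $C_\kappa$ and a transient length $N_0$ with $B_{\mu,n}\le C_\kappa\,\varepsilon\log\frac{1}{\varepsilon}$ for all $n\ge N_0$.

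For the first series I would split at the transient threshold,
\begin{align}
\sum_{i=0}^\infty \beta_i B_{\mu,i}=\sum_{i=0}^{N_0}\beta_i B_{\mu,i}+\sum_{i=N_0+1}^\infty\beta_i B_{\mu,i}.
\end{align}
The tail is the heart of the argument: bounding $B_{\mu,i}$ by its steady-state order and invoking Assumption~\ref{assumption:learning rate} gives
\begin{align}
\sum_{i=N_0+1}^\infty\beta_i B_{\mu,i}\le C_\kappa\,\varepsilon\log\tfrac{1}{\varepsilon}\sum_{i=0}^\infty\beta_i\le C_\kappa C_1,
\end{align}
which is finite and, crucially, independent of $\varepsilon$, since the two $\varepsilon\log\frac{1}{\varepsilon}$ factors cancel. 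It then remains to show that the transient block $\sum_{i=0}^{N_0}\beta_i B_{\mu,i}\le B_{\max}\sum_{i=0}^{N_0}\beta_i$ stays bounded as $\varepsilon\to 0$. This is the step I expect to be the \emph{main obstacle}: because $\sigma\to 0$, the belief transient decays more slowly (over a horizon $N_0\sim\log\frac{1}{\varepsilon}$), so one cannot simply discard the early, large-error iterations. Resolving it requires the explicit non-asymptotic form of $B_{\mu,i}$ from \cite{Malek2} together with the decay profile of $\beta_i$, to verify that these early iterations carry negligible step-size mass and hence contribute only $O(1)$.

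For the second series I would reduce everything to the first. Expanding $\Lambda_i=\gamma B_\omega^\star B_{\mu,i}+(\gamma+1)\Omega\sum_{j=0}^i\beta_j B_{\mu,j}$ and summing against $\beta_{\theta,i}$ yields two pieces. The first, using the universal bound on $B_{\mu,i}$ and the $\varepsilon$-independent summability $\sum_i\beta_{\theta,i}<C_2$, satisfies $\gamma B_\omega^\star\sum_i\beta_{\theta,i}B_{\mu,i}\le \gamma B_\omega^\star B_{\max}C_2<\infty$. For the second piece I would bound the inner partial sum by the full first series, $\sum_{j=0}^i\beta_j B_{\mu,j}\le M_1\triangleq\sum_{j=0}^\infty\beta_j B_{\mu,j}$, which is finite by the previous paragraph, so that $(\gamma+1)\Omega\sum_i\beta_{\theta,i}\sum_{j=0}^i\beta_j B_{\mu,j}\le(\gamma+1)\Omega M_1 C_2<\infty$. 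Both contributions are thus finite and uniformly bounded in $\varepsilon$, completing the plan; the only genuinely delicate point throughout remains the uniform control of the transient block identified above.
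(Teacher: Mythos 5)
Your proposal follows essentially the same route as the paper's proof: split the series at the transient horizon supplied by the result of~\cite{Malek2}, exploit the cancellation between the steady-state error level $\kappa\,\varepsilon\log\frac{1}{\varepsilon}$ and the step-size budget $\sum_n\beta_n=C_1/(\varepsilon\log\frac{1}{\varepsilon})$ to bound the tail, and then reduce the second series to the first (the paper factors out a uniform bound $\Lambda_i\le C_3$ and uses $\sum_i\beta_{\theta,i}<C_2$; your two-piece split of $\Lambda_i$ is equivalent). The one substantive difference is the ``main obstacle'' you flag---uniform control of the transient block as $\varepsilon\to 0$---which is not actually required by the statement. The lemma asserts finiteness of the two series for a fixed, sufficiently small $\varepsilon$, not a bound uniform in $\varepsilon$: for fixed $\varepsilon$, the head $\sum_{i=0}^{i(\varepsilon)-1}\beta_i B_{\mu,i}$ is a finite sum of finite terms (each $B_{\mu,i}$ is bounded since it controls a probability-type quantity, and each $\beta_i$ is finite), hence finite, and this is exactly how the paper disposes of it before writing the tail bound $C_1\kappa+C_1 r(\varepsilon)/(\varepsilon\log\frac{1}{\varepsilon})<\infty$. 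So your argument is complete for the claimed statement once you drop the demand for $\varepsilon$-uniformity. You are right, however, that if one wanted the stronger uniform-in-$\varepsilon$ bound (which would sharpen the interpretation of Theorems~\ref{theorem1} and~\ref{theorem2} in the small-$\varepsilon$ regime), one would need the non-asymptotic profile of $B_{\mu,i}$ and of the $\varepsilon$-dependent step sizes from~\cite{Malek2}; neither your proposal nor the paper carries that out.
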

 This section highlights one of the main contributions of our work. Our objective was not only to develop a method capable of addressing complex multi-agent settings with partial observability, but also to formulate and analyze it within a theory-driven framework. \par 
For the sake of tractability and feasibility of the analysis, we introduced certain assumptions and slightly simplified the general framework. This reflects the well-recognized trade-off in reinforcement learning between developing theoretically grounded methods and pursuing empirically driven solutions.

\section{Simulation results}
\label{sec:experiment}
Complementing the theoretical analysis, we demonstrate the effectiveness of the proposed method through a series of simulation studies. The proposed scheme was evaluated in the \textit{pistonball} environment~\cite{pettingzoo}, where agents are represented as pistons capable of three discrete actions: move down, stay still, or move up. The global state of the environment is determined by the discretized position of the ball. The agents’ collective objective is to push the ball leftward through coordinated actions as shown in Figure \ref{fig:piston}. Each agent has limited observability of the game environment, restricted to its immediate neighborhood.
\begin{figure}[h]
    \centering
    \includegraphics[width=0.48\textwidth]{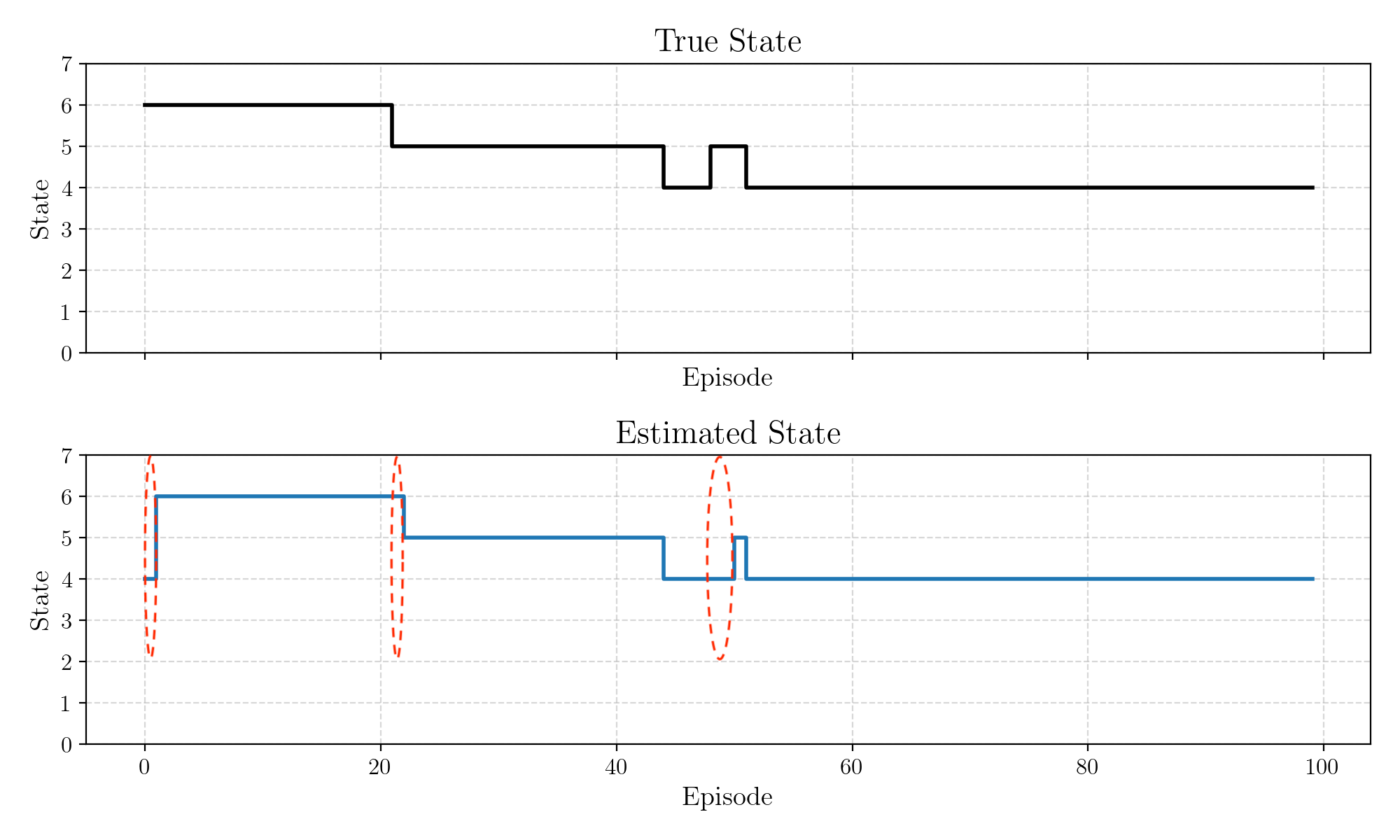}
    \caption{Evolution of the true state and the state estimated via social learning for agent $k = 3$ with state change probability $\epsilon = 0.25$.
}
    \label{fig:belief}
\end{figure}
\par 
To illustrate, assume that the pistons are numbered in ascending order from left to right as $[1, 2, \ldots, K]$.
The \textit{global state} is assigned the value $k$ if the center of the ball is positioned above piston $k$. Consequently, the state space has size $K$. Agents do not observe the exact location of the ball; instead, they receive observation indicating whether the ball is in their vicinity, i.e., being ``seen'' . The variable $\xi_{k,n}$ thus takes one of two possible values, as defined below:
\begin{figure*}[h]
\begin{minipage}[b]{0.5\linewidth}
  \centering
  \centerline{\includegraphics[width=8cm]{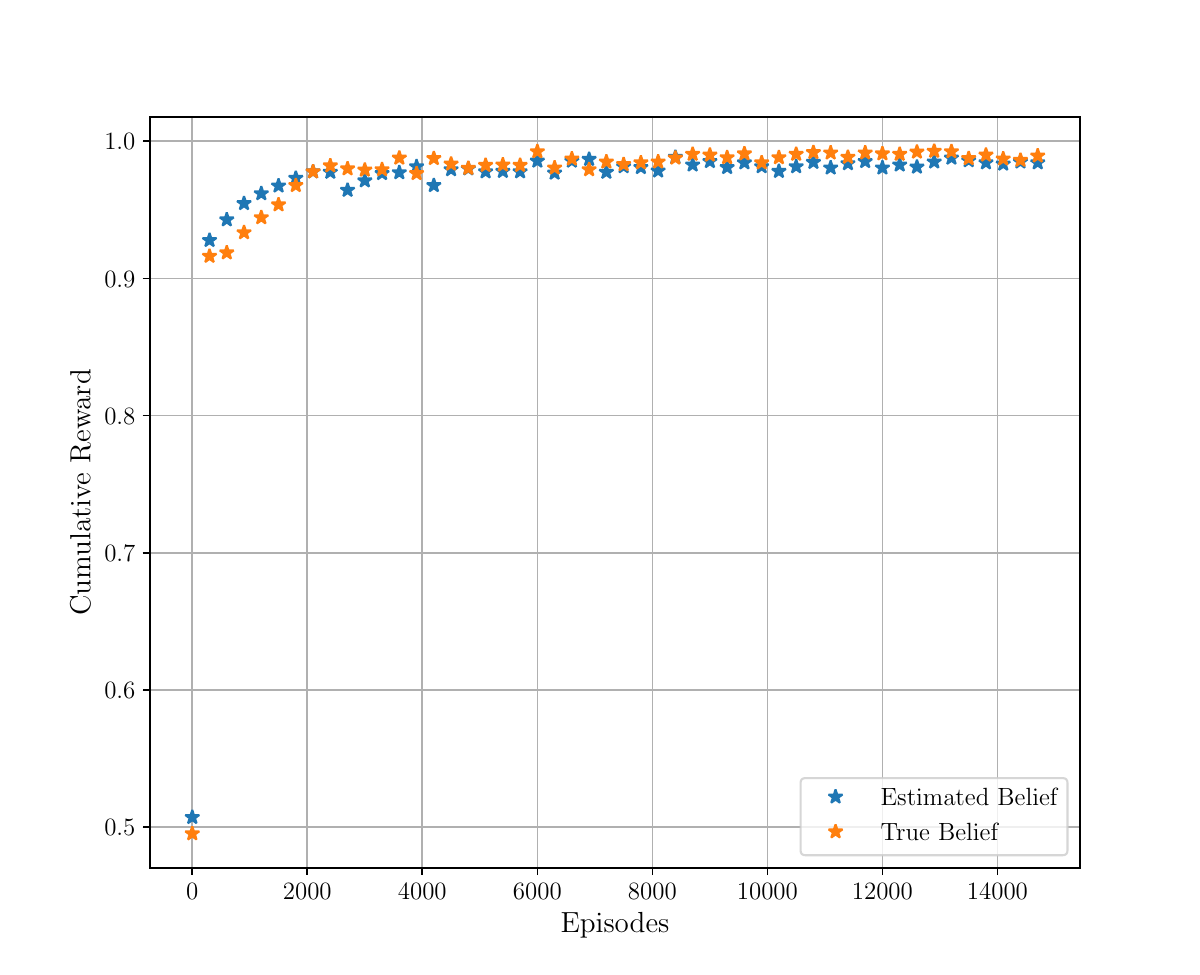}}
\centerline{(a)  $\varepsilon=0.25$ }\medskip
\end{minipage}
\begin{minipage}[b]{0.5\linewidth}
  \centering
  \centerline{\includegraphics[width=8cm]{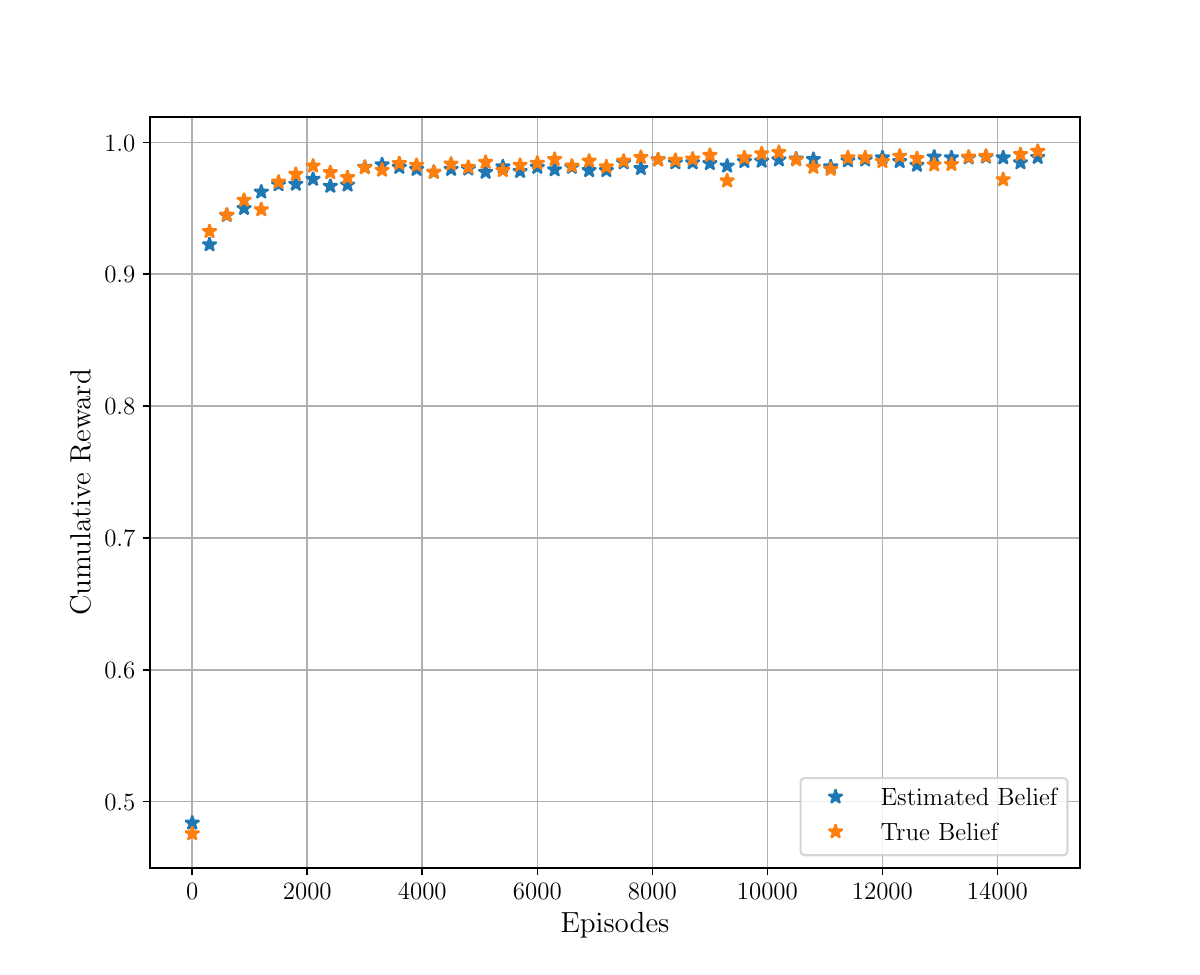}}
  \centerline{(b)  $\varepsilon=0.125$  }\medskip
  
\end{minipage}
\caption{Comparison between the proposed scheme and MAOPAC in terms of cumulative reward for different state change probabilities: $\varepsilon = 0.25$ and $\varepsilon = 0.125$.}
\label{fig:reward}
\end{figure*}
\begin{figure*}[h]
\begin{minipage}[b]{0.5\linewidth}
  \centering
  \centerline{\includegraphics[width=8cm]{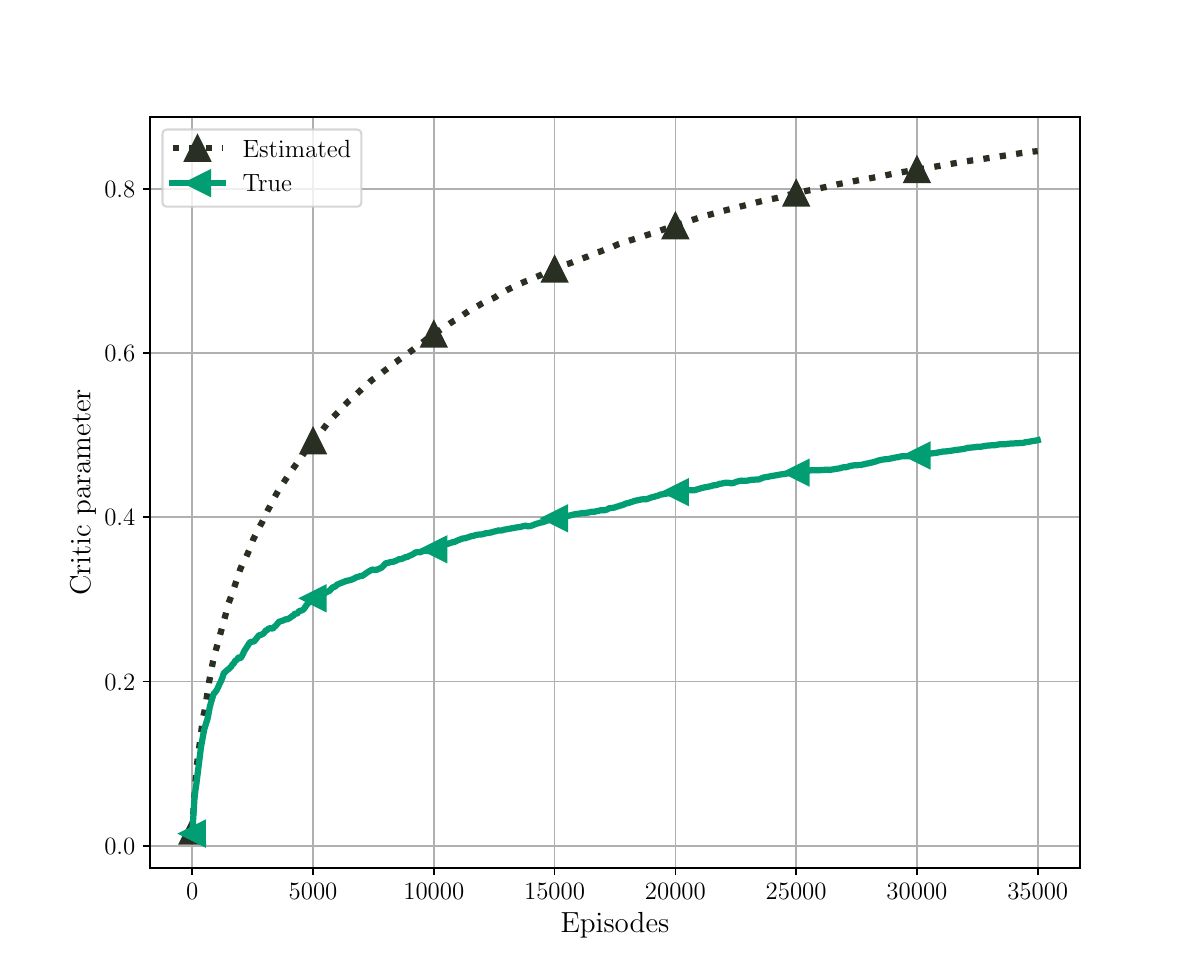}}
  \centerline{(a)  $\varepsilon=0.25$ }\medskip
\end{minipage}
\begin{minipage}[b]{0.5\linewidth}
  \centering
  \centerline{\includegraphics[width=8cm]{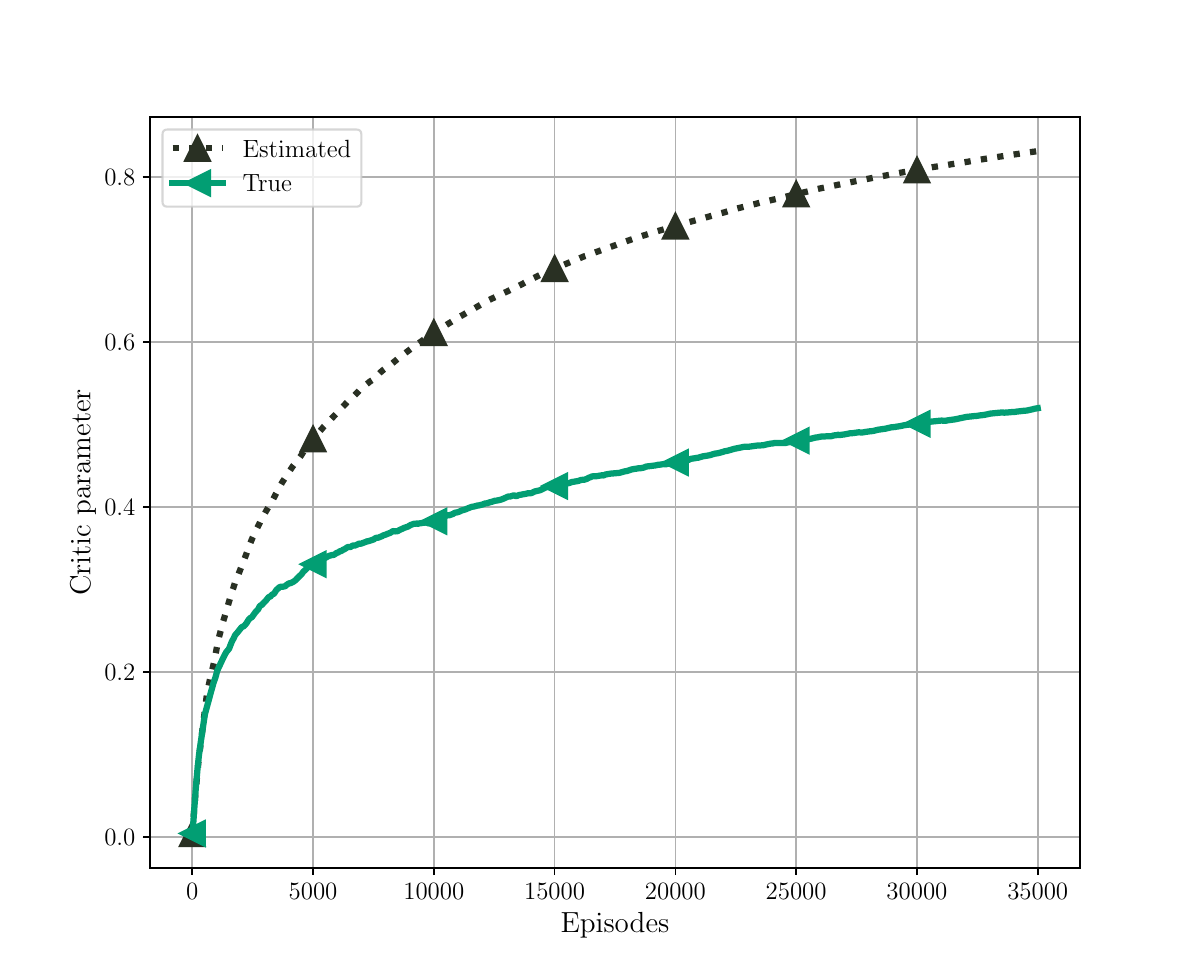}}
  \centerline{(b) $\varepsilon=0.125$ }\medskip

\end{minipage}
\caption{Comparison between the proposed scheme and MAOPAC in terms of critic parameter evolution for state $s = 5$ and agent $k = 3$, under different state change probabilities: $\varepsilon = 0.25$ and $\varepsilon = 0.125$.
}
\label{fig:critic}
\end{figure*}
\begin{figure*}[h]
\begin{minipage}[b]{0.5\linewidth}
  \centering
\centerline{\includegraphics[width=8cm]{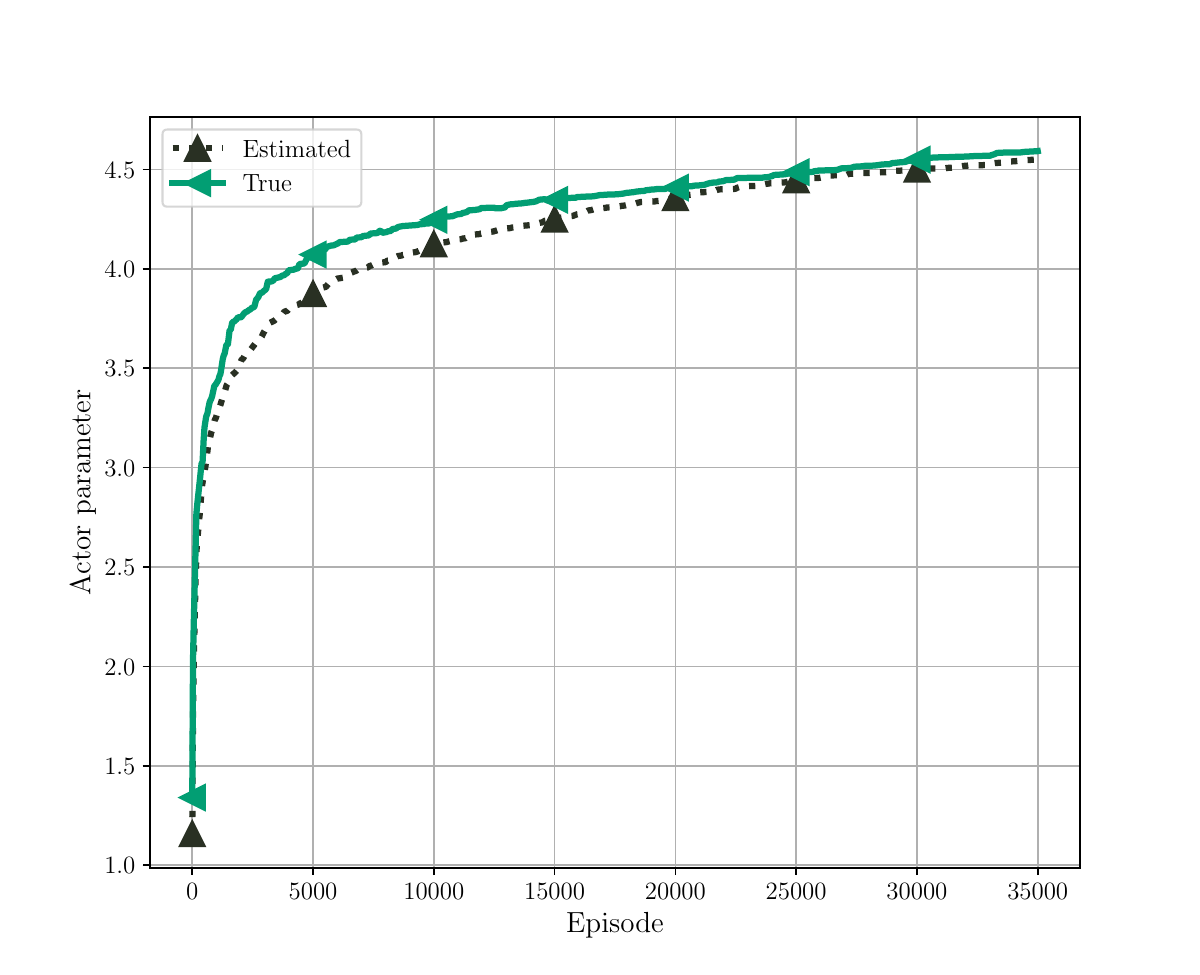}}
  \centerline{(a)  $\varepsilon=0.25$ }\medskip
\end{minipage}
\begin{minipage}[b]{0.5\linewidth}
  \centering
  \centerline{\includegraphics[width=8cm]{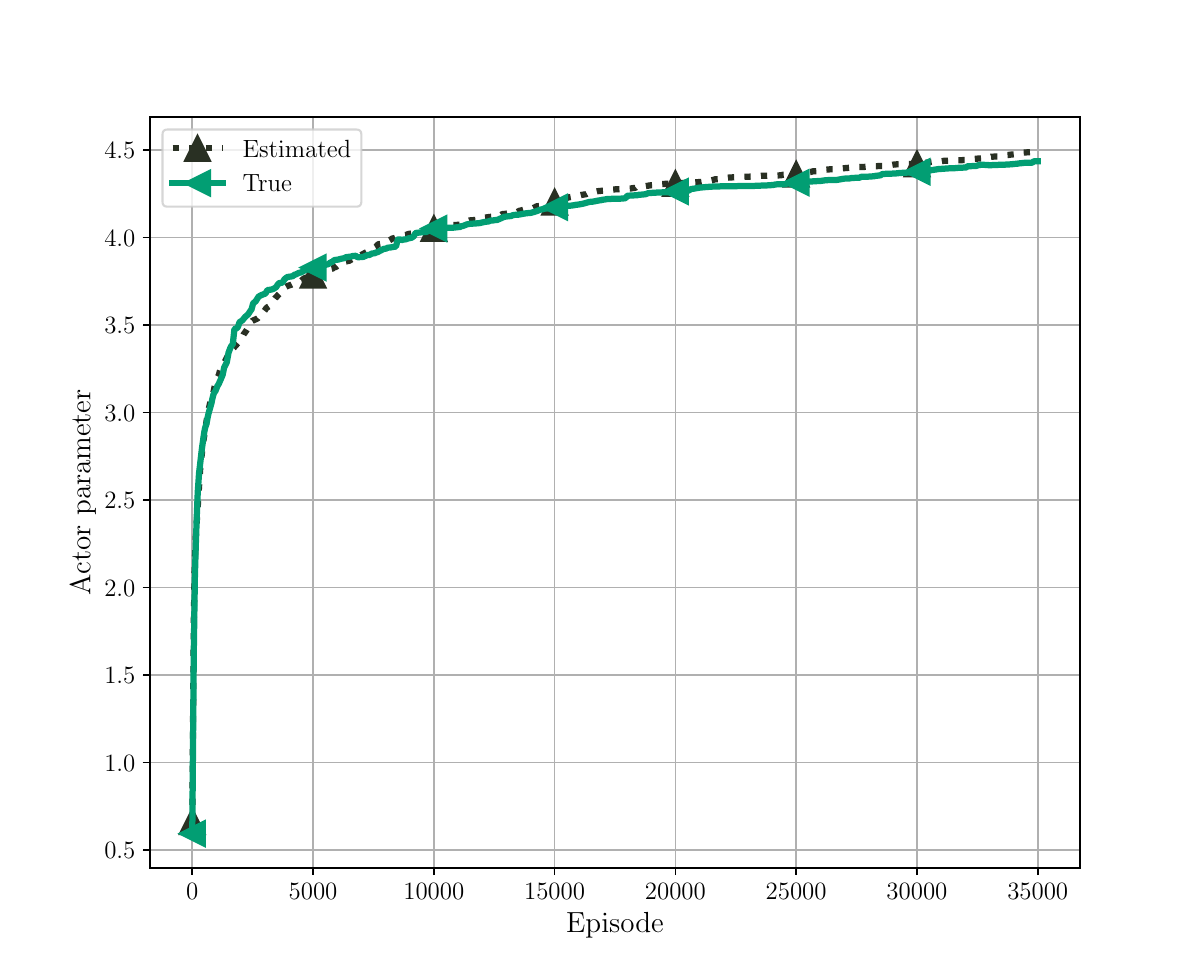}}
  \centerline{(b)  $\varepsilon=0.125$ }\medskip

\end{minipage}
\caption{Comparison between the proposed scheme and MAOPAC in terms of actor parameter evolution for state $s = 5$ and agent $k = 3$, under different drift parameters: $\varepsilon = 0.25$ and $\varepsilon = 0.125$.
 }
\label{fig:actor}
\end{figure*}

\begin{equation}
\xi_{k,n}=
\begin{cases}
\text{``seen'',} & \text{if } s_n \in [k-1,k+1]  \\
\text{``unseen'',} & \text{otherwise}  
\end{cases}
\end{equation}
 We define the likelihood function such that if the ball is completely ``unseen'' by piston $k$, the likelihood becomes uniform over all states, i.e., 
\begin{align}
    \mathcal{L}_k(\xi_{k,n}=\text{``unseen''}|s)= \frac{1}{K}, \forall s\in \mathcal{S}
\end{align}
In other words, when the ball is not visible to the agent---the likelihood function does not provide any new information. \par 
In the case when the ball is ``seen'' by agent~$k$, the corresponding piston assigns higher probabilities to the positions (states) within its local neighborhood and lower probabilities to all other positions, i.e.,
\begin{equation}
  \mathcal{L}_k (\xi_{k,n}=\text{``seen''}|s)=
\begin{cases}
1-\epsilon', & \text{if } s\in [k-1,k+1]  \\
\epsilon', & \text{otherwise}  
\end{cases}
\end{equation}
where  $\epsilon'>0$. The small positive constant $\epsilon'$ is introduced to account for potential noise in the observations.
 \par 
The performance of the proposed scheme is compared with the case where the states are fully observed. In other words, we primarily compare our approach with the existing MAOPAC algorithm. The theoretical analysis shows that the difference between the two approaches can be bounded, which is also confirmed by our simulation results. \par
In particular, we illustrate how the estimated states track the true states. As observed in Figure~\ref{fig:belief}, the estimated state closely matches the true state for the majority of the time. Mismatches in state estimation typically occur at the moments of state transitions. This is because, when a state changes, the belief vector may initially remain biased toward the previous state and requires several iterations (i.e., a transition period) to adjust and align with the new state.
\par
Additionally, we present plots showing the differences in cumulative rewards (Figures~\ref{fig:reward}(a)–(b)), as well as the evolution of the critic (Figures~\ref{fig:critic}(a)–(b)) and actor parameters (Figures~\ref{fig:actor}(a)–(b)) over time for two different state drift parameters: $\varepsilon = 0.25$ and $\varepsilon = 0.125$. These results demonstrate that the performance of the proposed scheme can closely approach that of the MAOPAC algorithm in slowly changing environments. In particular, we observe that the smaller the state change probability, the smaller the difference in the estimated values produced by the two approaches.

\par
Due to the use of hard assignment in state estimation, we observe that the learning parameters are not significantly affected by the magnitude of the difference between the true and estimated beliefs. Instead, the impact primarily depends on how frequently state estimation errors occur.
\par   

 \section{Conclusion}
 \label{sec:conclusion}
In this work, we studied the application of adaptive social learning for state estimation in multi-agent reinforcement learning under partially observable environments. The effectiveness of the proposed MARL+SL approach  was verified both theoretically and experimentally. Under slowly changing environments, the proposed method can achieve performance close to that obtained under full observability. One possible direction for future research is to relax the slow state evolution assumption used in the current setting. Additionally, it would be valuable to explore scenarios in which the likelihood functions are not known a priori but can instead be estimated using appropriate statistical or learning techniques.

\section{Acknowledgment}
We would like to express our gratitude to Mert Kayalıp for motivating and insightful discussions based on his work in \cite{Policy_eval}. We also acknowledge the use of ChatGPT-4.0 for grammar checking and improving the readability of the manuscript.
\appendices
\section{Proof of Lemma \ref{lemma1}}
\label{appendix:lemma1}
To avoid the repeated introduction of new variables, we adopt the convention that variables appearing with a superscript star (`$\star$') in the following proofs correspond to those in Algorithm~1, but are computed using the true belief vector $\mu_n^{\star}$. For example, $\rho_n^{\star}$ denotes the importance sampling ratio computed with knowledge of the true state.
\begin{proof}
Using the update formulas \eqref{alg:omega1} and \eqref{alg:omega2} we can derive 
\begin{align}
&\Delta \mathcal{W}_{c,n}=\mathcal{W}_{n+1}-\mathcal{W}_{c, i+1} =\left(I-\frac{1}{K} \mathbf{1 1}^T\right) \otimes I \nonumber\\
&\left(C_t \otimes I\right)\left(\Delta \mathcal{W}_{c,n}+\beta_{ n} \xi_n\right)
\end{align}
where 
\begin{align}
    \tau_n= \operatorname{col}\{\rho_{k,n}\delta_{k,n}\mu_{k,n}\}_{k=1}^{K}
\end{align}
Therefore, we have 
\begin{align}
&\mathbb{E}\left[\|\Delta \mathcal{W}_{c, n+1}\|^2\vert \mathcal{F}_n\right] \nonumber \\
&\stackrel{(a)}{\leq}\rho  \mathbb{E}\left[\left\| \Delta \mathcal{W}_{c, n}+\beta_{n} \tau_n\right\|^2 \vert \mathcal{F}_n\right] \nonumber \\
& \leq \rho_{\max} \frac{\beta_{n}^2}{\beta_{n+1}^2}\left[\mathbb{E}\left[\left\|\beta_{n}^{-1} \Delta \mathcal{W}_{c,n}\right\|^2 \vert \mathcal{F}_n\right]\right. \nonumber \\
&\left. \quad +2 \mathbb{E}\left[\left\|\beta_{n}^{-1} \tau_{n+1}^T \Delta \mathcal{W}_{c,n}\right\| \vert \mathcal{F}_n\right]+\mathbb{E}\left[\left\| \tau_{n+1}\right\|^2 \vert \mathcal{F}_n\right]\right] \nonumber\\ 
& \leq \rho_{\max} \frac{\beta_{n}^2}{\beta_{n+1}^2}\left[\left\|\beta_{n}^{-1} \Delta \mathcal{W}_{c,n}\right\|^2+2\left\|\beta_{n}^{-1} \Delta \mathcal{W}_{c,n}\right\|\right. \nonumber \\
&\left.\mathbb{E}\left[\left\|\tau_{n+1}\right\|^2 \vert \mathcal{F}_n\right]^{1 / 2}+\mathbb{E}\left[\left\|\tau_{n+1}\right\|^2 \vert \mathcal{F}_n\right]\right] \label{eq:int8}
\end{align}
where $(a)$ is due to Assumption \ref{assumption:combination matrix}.  Next, we bound the term 
\begin{align}
&\left\|\tau_{n+1}\right\|^2 \nonumber\\
& =\sum_{k \in \mathcal{K}}\left\|\rho_{n}\left(r_{k,n}+\gamma \eta_{k,n}^T \omega_{k,n}-\mu_{k,n}^T \omega_{k,n}\right) \mu_{k,n}\right\|^2 \nonumber\\
& \leq \rho_{\max}\sum_{k \in \mathcal{K}}\left\|r_{k,n} +\left(\gamma \eta_{k,n}^T \omega_{k,n}-\mu_{k,n}^T\right) \omega_{k,n} \right\|^2 \nonumber\\
& \leq \rho_{\max}\sum_{k \in \mathcal{K}}\left(|r_{k,n} |^2+2|r_{k,n}| \cdot\left\|\left(\gamma \eta_{k,n}^T-\phi_t^T\right) \omega_{k,n} \right\| \right.\nonumber\\
&\left.\quad +\left\|\left(\gamma \eta_{k,n}^T-\mu_{k,n}^T\right) \omega_{k,n} \right\|^2\right) \nonumber\\
& =\rho_{\max}\sum_{ k\in \mathcal{K}}\left(\left|r_{k,n}\right|^2+2\left|r_{k,n}\right|  \cdot\left|\left(\gamma \eta_{k,n}^T-\mu_{k,n}^T\right) \omega_{k,n}\right| \right. \nonumber\\
&\left.\quad +\left\|\left(\gamma \eta_{k,n}^T-\mu_{k,n}^T\right) \omega_{k,n}\right\|^2\right)
\end{align}
Since rewards and belief vectors are bounded, then for any $M>0$ there exist $K_1>0$ such that 
\begin{equation}
\label{eq:int71}
\mathbb{E}\left[\left\|\tau_{n+1}\right\|^2 \mathbb{I}_{\left\{\sup _{t \leq n}\left\|\mathcal{W}_{t}\right\| \leq M\right\}} \vert \mathcal{F}_n\right] \leq K_1
\end{equation}
Substituting \eqref{eq:int71} into \eqref{eq:int8} we get
\begin{align}
&\mathbb{E}\left[ \|\Delta \mathcal{W}_{c, n+1}\|^2\vert \mathcal{F}_n\right] \nonumber \\
&\leq \rho \mathbb{E}\left[\left\|\Delta \mathcal{W}_{c, n}\right\|^2+2\sqrt{K_1} \cdot\left\|\beta_{n}\Delta \mathcal{W}_{c, n}\right\| \right. \nonumber \\
&\left. \quad + K_1\beta_{n}^2\right]\mathbb{I}_{\left\{\sup _{\tau \leq n}\left\|\mathcal{W}_\tau\right\| \leq M\right\}} \label{eq:int9}
\end{align}
Next dividing both sides of the inequality in \eqref{eq:int9} by $\beta_{n+1}$ we obtain for some $K_2>0$
\begin{align}
&\mathbb{E}\left[\|\beta_{n+1}^{-1} \Delta \mathcal{W}_{c, n+1}\|^2\vert \mathcal{F}_n\right] \leq \frac{\rho \beta_{n}^2}{\beta_{n+1}^2} \mathbb{E}\left[\left\|\Delta \mathcal{W}_{c, n}\beta_n^{-1}\right\|^2 \right.\nonumber\\
&\left.\quad +2\sqrt{K_1} \cdot\left\|\beta_{n}^{-1}\Delta \mathcal{W}_{c, n}\right\| + K_1 \right]\mathbb{I}_{\left\{\sup _{t \leq n}\left\|\mathcal{W}_t\right\| \leq M\right\}} \nonumber\\
&\leq \frac{\rho \beta_{n}^2}{\beta_{n+1}^2} \mathbb{E}\left[\left\|\Delta \mathcal{W}_{c, n}\beta_n^{-1}\right\|^2 \right.\nonumber\\
&\left.\quad+K_2\right]\mathbb{I}_{\left\{\sup _{t \leq n}\left\|\Delta \mathcal{W}_t\right\| \leq M_1\right\}}
\label{eq:int9}
\end{align}
There exist $n_0$ and $\alpha_0$ such that $\forall n>n_0$
\begin{align}
    \frac{\rho \beta_n}{\beta_{n+1}}\leq \alpha_0<1   
\end{align}
To proceed with the analysis, we introduce the following notation:
\begin{equation}
   x_n\triangleq \mathbb{E}\left[ \beta_{n}^{-1}|\|\Delta \mathcal{W}_{c, n}\|^2\vert \mathcal{F}_{n-1}\right] \mathbb{I}_{\left\{\sup _{t \leq n}\left\|\Delta \mathcal{W}_t\right\| \leq M_1\right\}},  
\end{equation}
 then the recursion in \eqref{eq:int9} can rewritten as 
\begin{align}
x_n  &\leq(\alpha_0)^{n-n_0} x_{n_0}+K_2\sum_{i=0}^{n-n_0} \alpha_0^{i} \nonumber\\
&\leq (\alpha_0)^{n-n_0}x_{n_0}+K_2\sum_{i=0}^{\infty} \alpha_0^{i} \nonumber\\
&=(\alpha_0)^{n-n_0} x_{n_0}+\frac{K_2}{1-\alpha_0}\triangleq K<\infty
\end{align}
Therefore we have that 
\begin{align}
    \mathbb{E}\left[\|\Delta \mathcal{W}_{c, n}\|^2\vert \mathcal{F}_{n-1}\right] \mathbb{I}_{\left\{\sup _{\tau \leq n}\left\|\Delta \mathcal{W}_\tau\right\| \leq M_1\right\}} \leq K \beta_{n}^2 \label{eq:int_int}
\end{align}
Summing both sides of the inequality in \eqref{eq:int_int} over all values of $n$ and using Assumption \ref{assumption:learning rate} we get 
\begin{align}
    \sum_{n=0}^{\infty}\mathbb{E}\left[\|\Delta \mathcal{W}_{c, n}\|^2\vert \mathcal{F}_{n-1}\right] \mathbb{I}_{\left\{\sup _{t \leq n}\left\|\mathcal{W}_t\right\| \leq M_1\right\}} < \infty
\end{align}
which implies  
\begin{align}
    \mathbb{E}\left[\sum_{n=0}^{\infty}\|\Delta \mathcal{W}_{c, n}\|^2\mathbb{I}_{\left\{\sup _{t \leq n}\left\|\mathcal{W}_t\right\| \leq M_1\right\}}  \right] < \infty \label{eq:int999}
\end{align}
In order to satisfy the condition in~\eqref{eq:int999}, the following must hold:
\begin{align}
    \sum_{n=0}^{\infty}\|\Delta \mathcal{W}_{c, n}\|^2\mathbb{I}_{\left\{\sup _{t \leq n}\left\|\mathcal{W}_t\right\| \leq M_1\right\}}  < \infty, \text{ a.s.} \label{eq:1234}
\end{align}
Otherwise, if there was a positive probability that the sum in \eqref{eq:1234} diverges, then the expectation in \eqref{eq:int999} would also diverge.
The condition in \eqref{eq:1234}, in turn, implies
\begin{align}
\lim \limits_{n\to \infty}\Delta \mathcal{W}_{c,n} \to  0 \text{ a.s.}
\end{align}
\end{proof}
\section{Proof of Lemma~\ref{lemma:finiteness}}
\label{appendix:lemma2}
\begin{proof}
\noindent \noindent According to~\cite{Malek2}, the state estimation error can be asymptotically bounded as 
\begin{align}
    \limsup_{i \rightarrow \infty} B_{\mu,i} \leq \kappa~ \varepsilon \log{\frac{1}{\varepsilon}} + o\left(\varepsilon \log{\frac{1}{\varepsilon}}\right), 
\end{align}
Therefore, for any $\epsilon$ small, there exists $i(\epsilon) \in \mathbb{N}$ such that $\forall i \geq i(\varepsilon)$
\begin{equation}
    B_{\mu,i} \leq \kappa~ \varepsilon \log{\frac{1}{\varepsilon}} + r(\epsilon), 
\end{equation}
where $r(\epsilon)$ is a positive function verifying
\begin{align}
    \lim_{\epsilon \rightarrow 0 } \frac{r(\epsilon)}{\varepsilon \log{\frac{1}{\varepsilon}}} = 0.
\end{align}
On the other hand, we have 
\begin{align}
\sum_{i=0}^n \beta_i B_{\mu,i} &= \sum_{i=0}^{i(\epsilon)-1} \beta_i B_{\mu,i} + \sum_{i=i(\epsilon)}^{n} \beta_i B_{\mu,i} \nonumber\\
&\leq \sum_{i=0}^{i(\epsilon)-1} \beta_i B_{\mu,i} \nonumber\\
&\quad+ \sum_{i=i(\epsilon)}^{n} \beta_i \left(\kappa~ \varepsilon \log{\frac{1}{\varepsilon}} + r(\epsilon)\right) 
\end{align} 
By taking the limit as $n \rightarrow \infty$:
\begin{align}
    \sum_{i=0}^\infty \beta_i B_{\mu,i} &\leq \sum_{i=0}^{i(\epsilon)-1} \beta_i B_{\mu,i} \nonumber \\
    &\quad + \frac{C_1}{\varepsilon \log{\frac{1}{\varepsilon}}}\left(\kappa~ \varepsilon \log{\frac{1}{\varepsilon}} + r(\epsilon)\right) \nonumber\\
    &= \sum_{i=0}^{i(\epsilon)-1} \beta_i B_{\mu,i} \nonumber \\
    &\quad + C_1 \kappa + \frac{C_1 r(\epsilon)}{\varepsilon \log{\frac{1}{\varepsilon}}} < \infty \label{eq:intint2}
\end{align}
Now, let us consider the term 
\begin{align}
   \Lambda_i &\triangleq  \gamma B_{\omega}^{\star}B_{\mu, i}+(\gamma+1) \Omega \sum_{j=0}^{i}\beta_j B_{\mu,j}   \label{eq:intint}
\end{align}
The summation term in~\eqref{eq:intint} is finite for any \( i \), as shown in~\eqref{eq:intint2}. Therefore, there exists a constant \( C_3 > 0 \) such that \( \Lambda_i \leq C_3 \) for all \( i \). Hence, using the summability of the learning rate \( \beta_{\theta,i} \), we have
\begin{align}
    \sum_{i=0}^{\infty} \beta_{\theta, i} \Lambda_i \leq C_3  \sum_{i=0}^{\infty} \beta_{\theta, i} < \infty 
\end{align}
\end{proof}
\begin{lemma}[\textbf{Bounded critic}]
\label{lemma:bounded critic}
The critic parameters $\|\omega_{k,n}\|$ are bounded almost surely when the states change sufficiently slowly (i.e. $\varepsilon\to 0$) and there exists some constant $B_{\omega}>0$ such that
\begin{align}
    \mathbb{E}(\|\omega_{k,n}\|)\leq B_{\omega}
\end{align}
\end{lemma}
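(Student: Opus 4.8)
The plan is to bypass the usual ODE/stochastic-approximation machinery and instead derive a purely deterministic Gronwall-type comparison bound on the largest critic norm across the network, $M_n \triangleq \max_{k}\|\omega_{k,n}\|$. The argument rests on three structural facts: after the hard assignment in \eqref{eq:int13} and \eqref{eq:eta3}, the belief vectors $\mu_{k,n}$ and $\eta_{k,n}$ are standard basis vectors, so $\|\mu_{k,n}\|=\|\eta_{k,n}\|=1$; the joint importance sampling ratio is deterministically bounded, $\rho_n=\prod_{k}\rho_{k,n}\le \rho_{\max}^{K}$; and the diffusion step \eqref{alg:omega2} is a convex combination (the columns of the doubly stochastic matrix $C$ sum to one), hence non-expansive in any norm.

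First I would bound the temporal-difference error in \eqref{alg:delta} by Cauchy--Schwarz and the reward bound $R_{\max}$, giving $|\delta_{k,n}|\le R_{\max}+(\gamma+1)\|\omega_{k,n}\|$. Substituting this together with $\|\mu_{k,n}\|=1$ and $\rho_n\le\rho_{\max}^{K}$ into the TD update \eqref{alg:omega1} yields the one-step growth estimate
\begin{align}
\|\widetilde{\omega}_{k,n}\| \le \big(1+\beta_n\rho_{\max}^{K}(\gamma+1)\big)\|\omega_{k,n}\| + \beta_n\rho_{\max}^{K}R_{\max}.
\end{align}
Taking the maximum over $k$ and applying the non-expansiveness of the diffusion step \eqref{alg:omega2} gives the scalar recursion $M_{n+1}\le (1+a_n)M_n + b_n$, where $a_n\triangleq \beta_n\rho_{\max}^{K}(\gamma+1)$ and $b_n\triangleq \beta_n\rho_{\max}^{K}R_{\max}$. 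Unrolling this recursion from the initialization $M_0=0$ and using the elementary inequality $\prod_{i}(1+a_i)\le \exp\big(\sum_i a_i\big)$ produces the closed-form bound $M_{n+1}\le \exp\big(\sum_{i=0}^{n}a_i\big)\sum_{i=0}^{n}b_i$.

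The crux of the argument---and the step I expect to be the main obstacle---is establishing that this bound stays finite uniformly in $n$, which is exactly where the unusual learning-rate condition in Assumption~\ref{assumption:learning rate} enters. Because $\sum_{i=0}^{\infty}\beta_i=C_1/(\varepsilon\log\tfrac{1}{\varepsilon})<\infty$, both $\sum_i a_i$ and $\sum_i b_i$ are finite, so the exponential factor and the forcing sum converge and one obtains $\sup_n M_n\le \exp\big(\rho_{\max}^{K}(\gamma+1)\sum_i\beta_i\big)\,\rho_{\max}^{K}R_{\max}\sum_i\beta_i \triangleq B_\omega<\infty$. This is precisely why the slow-drift regime ($\varepsilon\to 0$) is invoked: the summability of $\beta_i$ caps the total energy injected into the critic, preventing the divergence that a non-summable step size would cause, although the resulting constant itself grows with $1/(\varepsilon\log\tfrac{1}{\varepsilon})$ and is finite only for each fixed small $\varepsilon$. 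Since the resulting bound is deterministic, it yields both the almost-sure boundedness of $\|\omega_{k,n}\|$ and the moment bound $\mathbb{E}\|\omega_{k,n}\|\le B_\omega$ at once; alternatively, a supermartingale/Robbins--Siegmund argument in the style of the proof of Lemma~\ref{lemma1} could be used to reach the same conclusion while retaining the randomness of $\rho_n$ and sharpening the constant via $\mathbb{E}[\rho_n]$.
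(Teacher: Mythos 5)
Your argument is valid for each fixed $\varepsilon>0$, and it is a genuinely different route from the paper's. The paper treats the critic recursion as a stochastic approximation: it introduces the mean drift $\bar h_{k,n}=\mathbb{E}(\delta_{k,n}^{\star}\mu_n^{\star}\mid\mathcal{F}_n)$ and the perturbation $M_{k,n}=\delta_{k,n}\mu_{k,n}-\bar h_{k,n}$, verifies the growth condition $\mathbb{E}(\|M_{k,n}\|^2\mid\mathcal{F}_n)\le K(1+\|\omega_{k,n}\|^2)$, argues that $M_{k,n}$ behaves asymptotically as a martingale difference because the state-estimation error vanishes as $\varepsilon\to 0$, and then invokes a stability theorem (Theorem A.2 of \cite{Actor_critic2}) to conclude almost-sure boundedness, extracting the moment bound afterwards by conditioning. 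You instead give a purely deterministic, self-contained Gronwall comparison: unit-norm beliefs from the hard assignment \eqref{eq:int13}, the bound $\rho_n\le\rho_{\max}^{K}$, non-expansiveness of the doubly stochastic diffusion \eqref{alg:omega2} in the network max-norm, and, decisively, the summability $\sum_n\beta_n=C_1/(\varepsilon\log\tfrac{1}{\varepsilon})<\infty$ from Assumption~\ref{assumption:learning rate}. Every step of yours is airtight and yields the almost-sure and expectation bounds simultaneously, whereas the paper's ``approximately a martingale difference'' step is heuristic; you also correctly identified that the paper's nonstandard summable step size is exactly what makes an elementary argument possible.

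The weakness of your version is the constant. You obtain $B_\omega\sim\exp\bigl(\rho_{\max}^{K}(\gamma+1)\sum_i\beta_i\bigr)\,\rho_{\max}^{K}R_{\max}\sum_i\beta_i$ with $\sum_i\beta_i=C_1/(\varepsilon\log\tfrac{1}{\varepsilon})$, which diverges exponentially as $\varepsilon\to 0$ --- precisely the regime the lemma invokes, and $B_\omega$ is consumed downstream by $\Omega$ in Theorem~\ref{theorem1} and by the constants of Theorem~\ref{theorem2}, so as stated your bound would render those results vacuous in the slow-drift limit. Fortunately the fix lives inside your own framework: do not discard the negative term in the TD update. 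Since $\mu_{k,n}$ is a basis vector, the update \eqref{alg:omega1} modifies only the selected coordinate $s$, namely $\widetilde{\omega}_{k,n}(s)=(1-\beta_n\rho_n)\omega_{k,n}(s)+\beta_n\rho_n\bigl(r_{k,n}+\gamma\,\omega_{k,n}^{T}\eta_{k,n}\bigr)$, and $|\omega_{k,n}^{T}\eta_{k,n}|\le\|\omega_{k,n}\|_\infty$ because $\eta_{k,n}$ is also a basis vector. Hence, whenever $\beta_n\rho_n\le 1$ (the same stepsize restriction \eqref{eq:int12345} already imposed in the proof of Theorem~\ref{theorem1}), the max-norm ball of radius $R_{\max}/(1-\gamma)$ is invariant under \eqref{alg:omega1}, and it is trivially invariant under the convex combination \eqref{alg:omega2}. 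By induction from $\omega_{k,0}=0$ this gives $\|\omega_{k,n}\|_\infty\le R_{\max}/(1-\gamma)$ for all $n$ and $k$, i.e., $B_\omega=\sqrt{S}\,R_{\max}/(1-\gamma)$, uniform in both $n$ and $\varepsilon$ and with no exponential factor. With that refinement your deterministic route is strictly stronger than the paper's argument; without it, your proof establishes the lemma only with a constant that degrades badly exactly where the paper needs it.
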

\begin{proof}
To show the boundness of $\omega_{k,n}$, we apply Theorem A.2. from \cite{Actor_critic2}. Let  
\begin{align}
\bar{h}_{k,n}&\triangleq \mathbb{E}\left(\delta_{k,n}^{\star} \mu_{n}^{\star} \vert \mathcal{F}_{n}\right)\\
M_{k,n}&\triangleq\delta_{k,n} \mu_{k,n}-\mathbb{E}\left(\delta_{k,n}^{\star} \mu_{n}^{\star} \vert \mathcal{F}_{n}\right)  \label{eq:int3}
\end{align}
where $\mathcal{F}_{n}$ denotes the history of transitions and rewards up to time $n$, experienced by all agents. \par 
The variable $\bar{h}_{k,n}$ is based on the true state values and hence is defined in the same way as in \cite{Actor_critic2}. Therefore, it automatically satisfies the assumptions of Theorem A.2 \cite{Actor_critic2}. \par 
Now, we need to verify that $M_{k,n}$ also satisfies these conditions. In particular, we need to show that $M_{k,n}$ is a martingale difference and 
\begin{align}
\label{eq:int6}
\mathbb{E}\left(\|M_{k,n}\|^2 \vert \mathcal{F}_n\right)\leq K (1+\|\omega_{k,n}\|^2)
\end{align}
holds for some $K>0$.\par 
\noindent Using \eqref{eq:int3},  we can derive the following 
\begin{align}
    \mathbb{E} (M_{k,n}\vert \mathcal{F}_{n})&= \mathbb{E} \left(\delta_{k,n} \mu_{k,n}-\delta_{n}^{\star} \mu_{n}^{\star} \vert \mathcal{F}_{n}\right) \nonumber\\
    &=\mathbb{E}\left(\Delta \delta_{k,n} \mu_{k,n}+\delta_{k,n}^{\star} \Delta \mu_{n}\vert \mathcal{F}_{n}\right) \label{eq:int5}
\end{align}
Next, we extend the definition of $\Delta \delta_{k,n}$ 
\begin{align}
\Delta \delta_{k,n} &= \delta_{k,n}^{\star}-\delta_{k,n}=\gamma \eta^{\star T}_{n}\omega_{k,n}^{\star}- \mu^{\star T}_{n}\omega_{k,n}^{\star} \nonumber\\
&\quad -(\gamma \eta^{T}_{k,n}\omega_{k,n}-\mu^{T}_{k,n}\omega_{k,n})  \nonumber \\
&= \gamma \Delta \eta^{T}\omega_{k,n}^{\star} +\gamma \eta^{T}\Delta \omega_{k,n}  \nonumber\\
&\quad - \Delta \mu^{T}\omega_{k,n}^{\star} -\mu^{T}\Delta\omega_{k,n} \label{eq:int4}
\end{align}
Substituting  \eqref{eq:int4} into \eqref{eq:int5}, we can get that  $\mathbb{E} (M_{k,n}\vert \mathcal{F}_{n})$ is summation of the terms that is controllable by the state estimation errors. 

Using the fact that  state estimation error decreases and tends to zero when the probability of state change tends to zero \cite{Malek},  we get that, after sufficient number of iterations $N>n$,  $\lim \limits_{\epsilon \to 0 }\mathbb{E} (M_{k,n}\vert \mathcal{F}_n)$ can be approximated as a martingale difference. \\
Next, we check for the condition in \eqref{eq:int6}:
\begin{align}
 &\mathbb{E}\left(\| \delta_{k,n} \mu_{k,n}-\mathbb{E}(\delta_{k,n}^{\star}  \mu_{n}^{\star}|\mathcal{F}_n)\|^2\vert \mathcal{F}_{n})\right) \nonumber\\
&\leq 2 \mathbb{E} \left(\|\delta_{k,n}\mu_{k,n}\|^2 \vert \mathcal{F}_n\right)+2 \|\mathbb{E} \left(\delta_{n}^{\star}\mu_{n}^{\star} \vert \mathcal{F}_n\right)\|^2 \label{eq:int7}
\end{align}
Since rewards and belief vectors are bounded, there exists $k_1$ such that
\begin{align}
\label{eq:int1}
\mathbb{E} \left(\|\delta_{k,n}\mu_{k,n}\|^2 \vert \mathcal{F}_n\right)\leq k_1(1+\|\omega_{k,n}\|^2)
\end{align}
Similarly, there exist $k_2$ and  $k_3$ such that 
\begin{align}
  \mathbb{E} \left(\|\delta_{k,n}^{\star}\mu_{n}^{\star}\|^2 \vert \mathcal{F}_n\right)&\leq k_2(1+\|\omega_{n}^{\star}\|^2) \nonumber\\
  &=k_2(1+\|\omega_{k,n}-\Delta \omega_{k,n}\|^2) \nonumber \\
    &=k_3(1+\|\omega_{k,n}\|^2) \label{eq:int2}
\end{align}
Combining \eqref{eq:int7} with \eqref{eq:int1}-\eqref{eq:int2}  we get 
\begin{align}
      \mathbb{E}(\|\delta_{k,n}\mu_{k,n}\|^2 \vert \mathcal{F}_n)\leq K(1+\|\omega_{k,n}\|^2)
\end{align}
Therefore, we conclude that $\|\omega_{k,n}\|< \infty$ almost surely. \\
This implies that there exists a constant $B_{\omega} > 0$ such that
\begin{align}
    \mathbb{P}(\|\omega_{k,n}\| < B_{\omega}) = 1. \label{eq:int_int222}
\end{align}
Next, we define the following two complementary events:
\begin{align}
    \mathcal{E} \triangleq \left\{\|\omega_{k,n}\| < B_{\omega} \right\}, \quad \mathcal{E}^c \triangleq \left\{\|\omega_{k,n}\| \geq B_{\omega} \right\}.
\end{align}
Hence, using \eqref{eq:int_int222}, we obtain:
\begin{align}
    \mathbb{E}[\|\omega_{k,n}\|] &= \mathbb{P}(\mathcal{E}) \cdot \mathbb{E}[\|\omega_{k,n}\| \mid \mathcal{E}] \nonumber \\
    &\quad + \mathbb{P}(\mathcal{E}^c) \cdot \mathbb{E}[\|\omega_{k,n}\| \mid \mathcal{E}^c] \nonumber \\
    &\leq 1 \cdot \mathbb{E}[\|\omega_{k,n}\| \mid \mathcal{E}] + 0 \cdot \mathbb{E}[\|\omega_{k,n}\| \mid \mathcal{E}^c] \nonumber \\
    &\leq B_{\omega} \label{eq:int_int_int111}
\end{align}
\end{proof}
\section{Proof of Theorem \ref{theorem1}}
\label{appendix:theorem1}
\begin{proof} 
We start by introducing the following notation 
\begin{align}
    H_{n}^{\star} &\triangleq \mu_{n}^{\star} \mu_{ n}^{\star T}-\gamma \mu_{n}^{\star} \eta_{n}^{\star T} \\
        H_{k,n} &\triangleq \mu_{k,n} \mu_{ k,n}^{ T}-\gamma \mu_{k,n} \eta_{k,n}^{T} \\
        \Delta H_{k,n}& \triangleq H_{n}^{\star}-H_{k,n}  \\
d_{k, n} &\triangleq r_{k, n} \mu_{k, n} \\
\mathcal{W}_n & \triangleq \operatorname{col}\left\{\omega_{1, n}, \ldots, \omega_{K, }\right\} \\
\mathcal{C} & \triangleq C \otimes I_K \\
\mathcal{H}_n^{\star} & \triangleq I_K \otimes H_n^{\star} \\
\mathcal{D}_n & \triangleq \operatorname{col}\left\{d_{k, n}\right\}_{k=1}^K\\
\mathcal{H}_n &\triangleq \operatorname{diag}\left\{H_{k, n}\right\}_{k=1}^K
\end{align}
Next, we implement the following rearrangement in the expression  for $\mathcal{W}_{c,n+1}$:
\begin{align}
     \mathcal{W}_{c,n+1}&=
\left(\frac{1}{K} \mathbf{1}_K \mathbf{1}_K^{T} \otimes I\right) \mathcal{W}_{n+1} \nonumber\\
&=\left(\frac{1}{K} \mathbf{1}_K \mathbf{1}_K^{T} \otimes I\right)\mathcal{C}^{T}\left(I-\beta_n \rho_n \mathcal{H}_n\right) \mathcal{W}_n \nonumber\\
&=\left(I-\beta_n \rho_n \mathcal{H}_n\right) \mathcal{W}_{c,n}\nonumber\\
&\quad +\beta_n\rho_n \left(\frac{1}{K} \mathbf{1}_K \mathbf{1}_K^{T} \otimes I\right) \mathcal{D}_n \label{eq:W_est}
\end{align}
where 
\begin{align}
    d_{k,n}^{\star} \triangleq \boldsymbol{r}_{k, n}\mu_n^{\star} 
\end{align}
\begin{equation}
\mathcal{D}_n^{\star} \triangleq \operatorname{col}\{d_{k,n}^{\star}\}_{k=1}^{K}, \quad \mathcal{W}_n^{\star}\triangleq \mathbf{1}_K \otimes \boldsymbol{w}_n^{\star} 
\end{equation}
Similarly, for convenience, we rewrite the definition of $\mathcal{W}_{n+1}^{\star}$:
\begin{equation}
\mathcal{W}_{n+1}^{\star}=\left( I-\beta_n \rho_n^{\star} \mathcal{H}_n^{\star}\right) \mathcal{W}_n^{\star}+\beta_n \rho_n^{\star} \mathcal{D}_n^{\star} \label{eq:W_true}
\end{equation}
Using \eqref{eq:W_est} and \eqref{eq:W_true}, we obtain:
\begin{align}
& \Delta \mathcal{W}_{n}\triangleq\mathcal { W }_{n+1}^{\star}-\mathcal{W}_{c, n+1} \nonumber\\
& =\left(I-\beta_n \rho_n^{\star} \mathcal{H}_n^{\star}\right)\left(\mathcal{W}_n^{\star}-\mathcal{W}_{c, n}\right) \nonumber \\
& \quad -\beta_n \Delta \rho_n \mathcal{H}_n^{\star} \mathcal{W}_{c,n}-\beta_n \rho_n \boldsymbol{\Delta} \mathcal{H}_n \mathcal{W}_{c,n} \nonumber\\
&\quad +\beta_n \Delta \rho_n^{\star} \left(\frac{1}{K} \mathbf{1}_K \mathbf{1}_K^{\mathrm{T}} \otimes I\right)\mathcal{D}_n^{\star}\nonumber\\
&\quad +\beta_n \rho_n \left(\frac{1}{K} \mathbf{1}_K \mathbf{1}_K^{\mathrm{T}} \otimes I\right)\Delta \mathcal{D}_n \label{eq:int333}
\end{align}
where 
\begin{align}
    &\Delta \rho_n\triangleq \rho_n^{\star}-\rho_n, \quad \quad \Delta \mathcal{H}_n \triangleq \mathcal{H}_n^{\star}-\mathcal{H}_n  \\
      & \Delta \mathcal{D}_{n+1}^{\star} \triangleq \mathcal{D}_{n+1}^{\star}-\mathcal{D}_{c, n+1} 
\end{align}
Therefore, the expression in \eqref{eq:int333} yields 
\begin{align}
\label{eq:recursion_base}
\left\|\Delta \mathcal{W}_{n+1}^{\star} \right\| & \leq\left\|I-\rho_{\min}  \beta_n \mathcal{H}_n^{\star}\right\| \| \Delta \mathcal{W}_{n+1}^{\star}  \| \nonumber\\
&\quad +\beta_n \Delta \rho_n \left\|\mathcal{H}_n^{\star}\right\|\left\|\mathcal{W}_{c,n}\right\|  \nonumber\\
&\quad +\beta_n \rho_n \left\|\Delta \mathcal{H}_n\right\|\left\|\mathcal{W}_{c,n}\right\| \nonumber\\
&\quad +\beta_n \Delta \rho_n\left\|{\mathcal{D}}_n^{\star}\right\|+\beta_n \rho_n\left\|{\Delta \mathcal{D}}_n\right\|
\end{align}
Using the fact that $H_n^{\star} \triangleq \mu_{n}^{\star}(\mu_{n}^{\star} - \gamma \eta_{n}^{\star})^{T}$
is a rank-1 matrix and thus has at most one non-zero eigenvalue, the \( \ell_2 \)-norm of \( I - \rho_n \beta_n \mathcal{H}_n^{\star} \) is given by

\begin{align}
    \left\|I-\rho_{n}\beta_n \mathcal{H}_n^{\star}\right\|= \max\left \{1,\lambda' \right\}
\label{eq:matrix}
\end{align}
where 
\begin{align}
       \lambda'=1-\beta_n \rho_n(\|\mu_{n}^{\star}\|^2 -\gamma \mu_{n}^{\star T}\eta_{n}^{\star})  
\end{align}
Since the belief vectors are basis vectors and $\gamma < 1$, the following holds:
\begin{align}
 0\leq (\|\mu_n^{\star}\|^2-\gamma \mu_n^{\star T}\eta_{n}^{\star})\leq 1 
\end{align} 
Therefore, to ensure that the right-hand side of \eqref{eq:matrix} is at most $1$, it is sufficient to impose the following condition:
\begin{align}
    \beta_n \leq \frac{1}{\rho_{\max}} \label{eq:int12345}
\end{align}
Since under \eqref{eq:int12345}, we have $\|I - \beta_n \rho_n \mathcal{H}_n^{\star}\| = 1$, the recursion in \eqref{eq:recursion_base} can be upper bounded by
\begin{align}
    \left\|\Delta \mathcal{W}_{n+1}^{\star} \right\| &\leq   \|\Delta \mathcal{W}^{\star}_0\|+ \sum_{i=0}^{n}   \beta_i \Phi_i \stackrel{(a)}{=} \sum_{i=0}^{n}   \beta_i \Phi_i \label{eq:int789}
\end{align}
where in (a) we used the assumption that the critic parameters—both under full and partial observations—start with the same initial values, and where $\Phi_i$ is defined by  
 \begin{align}
     \Phi_i&\triangleq \| \Delta \rho_i \| \left\|\mathcal{H}_i^{\star}\right\|\left\|\mathcal{W}_{c,i}\right\| + \|\rho_i\| \left\|\Delta \mathcal{H}_i\right\|\left\|\mathcal{W}_{c,i}\right\| \nonumber\\
& \quad + \|\Delta \rho_i\|\left\|{\mathcal{D}}_i^{\star}\right\|+ \rho_i\left\|{\Delta \mathcal{D}}_i\right\|  \label{eq:int23}
 \end{align}
The term $\mathbb{E}\|\Delta \rho_i\|$ can be bounded as:
\begin{align}
\mathbb{E}\|\Delta \rho_i\| &= \mathbb{E}\left \|\prod_{k=1}^K\rho_{k,i}^{\star}-\prod_{k=1}^K \rho_{k,i}\right \| \nonumber\\
&=\mathbb{E} \left\|\sum_{k=1}^{K} \Delta \rho_{k,i}\prod\limits_{\ell=1}^{k-1} \rho_{\ell,i}^{\star} \prod_{z=k+1}^{K} \rho_{k,i}\right\| \nonumber \\
&\leq  \rho_{\max}^{K-1}  \sum_{k=1}^{K} \|\Delta \rho_{k,i} \|\stackrel{(a)}{\leq}  \rho_{\max}^{K-1} B \sum_{k=1}^{K}\|\Delta \mu_{k,i} \| \nonumber \\
&=\rho_{\max}^{K-1} B K B_{\mu,i} \label{eq:int_delta_rho}
\end{align}
where $(a)$ is due to Assumption \ref{assumption:policies} and the term   $\Delta \rho_{k,n}$ is defined as 
\begin{align}
\Delta \rho_{k,n}\triangleq \rho_{k,n}^{\star}-\rho_{k,n} 
\end{align}
 Next, we find a bound for $\mathbb{E}\left\|\Delta H_{k,i}\right\|$:
 \begin{align}
 \label{eq:int24}
     \mathbb{E} \left\| \Delta H_{k,i}\right\|&\leq \mathbb{E}\left\|\Delta \mu_{k,i} \mu_{ i}^{\star T}+\mu_{k,i} \Delta \mu_{k,i}^{T}\right\|\nonumber\\
     &\quad +\gamma \mathbb{E}\left\|\Delta\mu_{k,i} \eta_{i}^{\star T}+\mu_{k,i} \Delta \eta_{i}^{T}\right\| \nonumber\\
     &\leq (2+\gamma)\mathbb{E}\|\Delta \mu_{k,i}\|+\gamma \mathbb{E}\|\Delta \eta_{k,i}\|\nonumber\\
        &\leq 2(1+\gamma)B_{\mu,i}
 \end{align}
 We assume that rewards are bounded, i.e. $r_{k,n}\leq R_{\max}$. Therefore, $\mathbb{E}\|\mathcal{D}_i^{\star}\|$ can be bounded by 
 \begin{align}
\label{eq:int21}
    \mathbb{E} \|\mathcal{D}_i^{\star}\| \leq K R_{\max}
 \end{align}
Next, we derive a bound on the term \( \left\| \Delta \mathcal{D}_i \right\| \) as follows:
\begin{align}
\label{eq:int22}
   \mathbb{E} \left\|{\Delta \mathcal{D}}_i\right\| & \leq \sum_{k=1}^{K} \mathbb{E} \|\left(r_{k, i}^{\star}\mu_i^{\star}-r_{k, n}\mu_{k,i}\right) \|\nonumber\\
    &\leq K R_{\max} \mathbb{E} \|\Delta \mu_{k,i}\|+\sum_{k=1}^{K} \mathbb{E} \|\Delta r_{k,i}\|
\end{align}
where $\Delta r_{k,i}\triangleq (r_{k,i}^{\star}-r_{k,i})$.\par 
Noting that \( r_{k,i}^{\star} \triangleq r_{k}(s_i^{\star}, a_i^{\star}, s_{i+1}^{\star}) \) and \( r_{k,i} \triangleq r_{k}(s_i, a_i, s_{i+1}) \), where \( s_i^{\star} \) and \( a_{k,i}^{\star} \) denote the global state and the action taken by agent \( k \) at time \( i \) under full observability, we can deduce that
\begin{align}
    \mathbb{E} \|\Delta r_{k,i}\|
   & \stackrel{(a)}{\leq} \mathbb{E}(\Delta r_{k,i}|s_i^{\star}\neq s_i)\mathbb{P}(s_i^{\star}\neq s_i) \nonumber\\
    &\quad +\mathbb{E}(\Delta r_{k,i}|a_{k,i}^{\star}\neq a_{k,i})\mathbb{P}(a_{k,i}^{\star}\neq a_{k,i})\nonumber\\
    & \quad +\mathbb{E}(\Delta r_{k,i}|s_{i+1}^{\star}\neq s_i)\mathbb{P}(s_{i+1}^{\star}\neq s_{i+1})\nonumber\\
    & \quad 0 * \mathbb{P}(s_i^{\star}= s_i, a_{k,i}^{\star}=a_{k,i}, s_{i+1}^{\star}= s_{i+1})\nonumber\\
    &\stackrel{(b)}{\leq} \mathbb{E}(\Delta r_{k,i}|a_{k,i}^{\star}\neq a_{k,i})\mathbb{P}(a_{k,i}^{\star}\neq a_{k,i})\nonumber\\
    &\quad +2R_{\max}B_{\mu,i} +2R_{\max}B_{\mu,i+1}\label{eq:int2345}
\end{align}
where in (a) we apply the \textit{law of total expectation} and the \textit{union bound}, and in (b) we use the reward and state estimation error bounds.\par 
For simplicity of analysis, we assume that the behavioral policies are deterministic and disregard the randomness they induce, focusing solely on the uncertainty arising from state estimation. Under this assumption, we have \( a_{k,i}^{\star} \neq a_{k,i} \) if and only if the current state is estimated incorrectly, i.e., \( s_i^{\star} \neq s_i \). Hence, \( \mathbb{P}(a_{k,i}^{\star} = a_{k,i}) = \mathbb{P}(s_i^{\star} = s_i) \), and the inequality in~\eqref{eq:int2345} can be rewritten as:
\begin{align}
    \mathbb{E} \|\Delta r_{k,i}\|
    &\leq  \mathbb{E}(\Delta r_{k,i}|a_{k,i}^{\star}\neq a_{k,i})\mathbb{P}(s_{i}^{\star}\neq s_{i})\nonumber\\
    &\quad + 2R_{\max}B_{\mu,i} +2R_{\max}B_{\mu,i+1}\nonumber\\
    &\leq 4R_{\max}B_{\mu,i} +2R_{\max}B_{\mu,i+1} \label{eq:intaaa}
\end{align}
Substituting \eqref{eq:intaaa} into \eqref{eq:int22}, we obtain  
\begin{align}
    \mathbb{E}\|\Delta \mathcal{D}_i \| &\leq 5 K R_{\max}B_{\mu, i}+2 K R_{\max}B_{\mu, i+1} \nonumber \\
    &\stackrel{(a)}{\leq} 7 K R_{\max}B_{\mu, i}\label{eq:intbbb}
\end{align}
where (a) holds because the bound on the expected state estimation error is monotonically decaying, i.e., \( B_{\mu, i+1} \leq B_{\mu, i} \).
 \par  
Next, substituting \eqref{eq:int_delta_rho}-\eqref{eq:int21} and \eqref{eq:intbbb} into \eqref{eq:int23} ,  using Assumption \ref{assumption:policies} and Lemma \ref{lemma:bounded critic}, the variable $\mathbb{E}\Phi_{i}$ can be bounded by 
\begin{align}
\mathbb{E}\Phi_{i} &\leq \rho_{\max}^{K-1}BK (1+\gamma) B_{\omega}B_{\mu,i}\nonumber\\
&\quad +\rho_{\max}2(1+\gamma)B_{\omega} B_{\mu ,i} \nonumber\\
&\quad+\rho_{\max}^{K-1}BK^2R_{\max} B_{\mu,i} \nonumber\\
&\quad +7\rho_{\max}KR_{\max}B_{\mu,i} \label{eq:int Phi}
\end{align}
Using \eqref{eq:int Phi}, the recursion in \eqref{eq:int789} can be written as 
\begin{align}
   \mathbb{E} \left\|\Delta \mathcal{W}_{n+1}^{\star} \right\| &= \Omega \sum_{i=0}^{n} \beta_i B_{\mu, i} 
\end{align}
\end{proof}
\section{Proof of Theorem 2}
\label{appendix:theorem2}
\begin{proof}
We start with the updates for $\theta_{k,n}^a$ and $\theta_{k,n}^{\star a}$
\begin{align} 
\label{eq:int17}
\Delta \theta_{k,n+1}^{a}&=\Delta \theta_{k,n}^{a}+ \beta_{\theta, n}(\rho_n^{\star}\delta_{k,n}^{\star} \Psi_{k,n}^{\star a}- \rho_n\delta_{k,n} \Psi_{k,n}^a) \nonumber\\
&= \Delta \theta_{k,n}^a+ \beta_{\theta, n}\Delta\rho_{k,n}\delta_{k,n}^{\star}\Psi_{k,n}^{\star a} \nonumber\\
&+\beta_{\theta, n}\rho_{k,n}\Delta\delta_{k,n} \Psi_{k,n}^{\star a} \nonumber \\
&+\beta_{\theta, n}\rho_{k,n}\delta_{k,n} \Delta \Psi_{k,n}^{a}
\end{align}
where 
\begin{align}
    \Delta\Psi_{k,n}^a\triangleq \Psi_{k,n}^{\star a}-\Psi_{k,n}^{a}
\end{align}
Assuming $\Delta\theta_{k,0}=0$, the recursion in \eqref{eq:int17} can be rewritten as
\begin{align}
   \Delta \theta_{k,n+1}^a &= \sum_{i=0}^{n}\beta_{\theta, i}\Delta\rho_{k,i}\delta_{k,i}^{\star}\Psi_{k,i}^{\star a } \nonumber\\
   &+\sum_{i=0}^{n}\beta_{\theta, i}\rho_{k,i}\Delta\delta_{k,i} \Psi_{k,i}^{\star a} \nonumber\\
   &+\sum_{i=0}^{n}\beta_{\theta, i}\rho_{k,i}\delta_{k,i} \Delta \Psi_{k,i}^{a} 
\end{align}
Using Assumptions \ref{assumption:policies} and \ref{assumption:policy gradient} we can derive 
\begin{align}
\label{eq:int52}
     \mathbb{E} \|\Delta \theta_{k,n+1}^a\|&\leq E\sum_{i=0}^{n}\beta_{\theta, i}     \mathbb{E}(\|\Delta\rho_{k,i}\|\|\delta_{k,i}^{\star}\|) \nonumber\\
   &\quad+     E\sum_{i=0}^{n}\beta_{\theta, i} \mathbb{E}(\rho_{k,i}\|\Delta\delta_{k,i}\| ) \nonumber \\
&\quad+D\sum_{i=0}^{n}\beta_{\theta, i}\mathbb{E}(\rho_{k,i}\|\delta_{k,i}\|\|\Delta \mu_{k,i}\|) 
\end{align}
In addition, using \eqref{eq:int4}, the difference $\Delta \delta_{k,n}$ can be bounded by 
\begin{align}
\label{eq:int41}
    \|\Delta \delta_{k,i}\| &\leq \|\gamma \Delta \eta^{T}_{k,i}\omega_{k,n}^{\star} +\gamma \eta^{T}_{k,i}\Delta \omega_{k,i} \nonumber\\
&\quad - \Delta \mu^{T}\omega_{k,i}^{\star} -\mu^{T}\Delta\omega_{k,i} \|
\end{align}
The almost sure boundedness of $\|\omega_{k,n}^{\star}\|$ is established in \cite{SUTTLE20201549} (Lemma A.3). Therefore, following arguments similar to those used in the derivation of \eqref{eq:int_int_int111}, there exists a constant $B_{\omega}^{\star} > 0$ such that for all $n$,
\begin{align}
    \mathbb{E}\|\omega_{k,n}^{\star} \|\leq B_{\omega}^{\star}
\end{align}
Now, using the result of Theorem \ref{theorem1} we can rewrite \eqref{eq:int41} as:
\begin{align}
\label{eq:int51}
   \mathbb{E} \|\Delta \delta_{k,i}\| &\leq \gamma B_{\omega}^{\star}B_{\mu, i}+(\gamma+1) \Omega \sum_{j=0}^{i}\beta_j B_{\mu,j}\triangleq \Lambda_i 
\end{align}
Using the bounds for critic parameters  and rewards,  we can show that 
\begin{align}
    \mathbb{E}\|\delta_{k,n}\|&= \mathbb{E}\|r_{k,n}+\gamma\eta_{k,n}^{T}\omega_{k,n}-\mu_{k,n}^{T}\omega_{k,n}\| \nonumber \\
    &\leq R_{\max}+(\gamma+1) B_{\omega} \triangleq B_{\delta} \label{eq:delta_bound}
\end{align}
Similarly, we can derive the bound for $ \mathbb{E}\|\delta_{k,n}^{\star}\|$
\begin{align}
    \mathbb{E}\|\delta_{k,n}^{\star}\| &\leq R_{\max}+(\gamma+1) B_{\omega}^{\star} \triangleq B_{\delta}^{\star}  \label{eq:delta_star_bound}
\end{align}
Substituting  \eqref{eq:int_delta_rho} and \eqref{eq:int51}-\eqref{eq:delta_star_bound} into  \eqref{eq:int52} we get 
\begin{align}
   \mathbb{E} \|\Delta\theta_{k,n}\|&\leq \rho_{\max}^{K-1}B K E B_{\delta}^{\star} \sum_{i=0}^{n}\beta_{i,\theta} B_{\mu,i} \nonumber\\
   &+E\rho_{\max} \sum_{i=0}^{n} \beta_{\theta, i}\Lambda_i \nonumber\\
     &+D\rho_{\max} B_{\delta}\sum_{i=0}^{n}\beta_{\theta,i} B_{\mu,i}
\end{align}
\end{proof}
\bibliographystyle{IEEEtran}
\bibliography{references}
\end{document}